\documentclass[letterpaper,journal]{IEEEtran}
\usepackage{amsmath,amssymb,amsfonts,amsthm}
\usepackage{mathtools}
\usepackage{array}
\usepackage{textcomp}
\usepackage{stfloats}
\usepackage{url}
\usepackage{verbatim}
\usepackage{graphicx}
\graphicspath{{./}{../../}}
\usepackage{cite}
\usepackage[caption=false,font=footnotesize]{subfig}
\usepackage{hyperref}
\def\BibTeX{{  B\kern-.05em{\sc i\kern-.025em b}\kern-.08em
    T\kern-.1667em\lower.7ex\hbox{E}\kern-.125emX}}
\hypersetup{colorlinks=false,
            linkcolor=red,
            anchorcolor=green,
            citecolor=black}

\usepackage[usenames]{color}

\newtheorem{Lemma}{Lemma}
\newtheorem{Theorem}{Theorem}

\newtheorem{Proposition}{Proposition}
\newtheorem{Remark}{Remark}

\begin{document}

% \title{Realizing Wireless Distributed MoE: An Over-the-Air Computation Approach
% }

\title{
AirMoE: 
Realizing Over-the-Air Distributed Mixture-of-Experts Inference at the Wireless Edge
}

\author{{  {Huiling~Yang}, {Zhanwei~Wang}, and {Kaibin~Huang}                   }

\thanks{H.~Yang, Z.~Wang, and K.~Huang are with the Department of Electrical and Computer Engineering, The University of Hong Kong (HKU), Hong Kong SAR, China (Email: \{hlyang, zhanweiw,  huangkb\}@eee.hku.hk).
 Corresponding authors: K.~Huang and Z.~Wang.
}
}
\maketitle

\begin{abstract}

\emph{Mixture-of-experts} (MoE) architectures enable efficient \emph{large language model} (LLM) inference at the wireless edge through sparse activation. 
The \emph{wireless distributed MoE} (WIDE) architecture addresses edge-resource constraints by distributing experts across devices coordinated by an edge server. 
However, WIDE suffers from repeated uplink transmissions of high-dimensional expert outputs via orthogonal multiple access. 
To overcome this bottleneck, we propose \emph{AirMoE}, an \emph{over-the-air computing} (AirComp)-enabled framework for simultaneous expert-output aggregation via wireless waveform superposition.
Integrating AirComp into MoE inference introduces three challenges: fast-varying aggregation weights, layer-dependent error sensitivity, and channel-aware expert placement. 
To address these challenges, we construct an inference-aware AirMoE error metric to quantify aggregation distortion effects on \emph{end-to-end} (E2E) inference accuracy via perturbation-based layer-sensitivity calibration.
We then formulate a joint optimization problem to minimize this error and decompose it, without loss of optimality, into a two-timescale framework. 
At the fast timescale, we derive a globally optimal threshold-based power-control policy that separates devices into coefficient-aligned and full-power groups.  
At the slow timescale, we develop an activation- and channel-aware expert placement strategy that assigns more important experts to devices with lower channel-power cost.
Extensive experiments demonstrate that AirMoE outperforms representative baselines in E2E inference accuracy, especially under strong device heterogeneity.

\end{abstract}

\begin{IEEEkeywords}
Edge AI, mixture of experts,  distributed inference, power control, expert placement.
\end{IEEEkeywords}

\section{Introduction}
\label{sec:introduction}

The \emph{sixth-generation} (6G) mobile networks will support the deployment of \emph{large language models} (LLMs) at the network edge to bring intelligent services closer to users. 
They will provide reasoning capabilities to low-latency intelligent applications, such as virtual reality, autonomous systems, and personalized assistants~\cite{letaief2019roadmap,friha2024llm}.
However, edge deployment of LLMs remains challenging, as their substantial computation and memory demands often exceed the capacity of individual edge nodes~\cite{zhou2019edge,zheng2026multi}.
A popular LLM architecture, known as mixture-of-experts (MoE), expands model capacity through multiple expert modules with per-token sparse expert activation~\cite{lepikhin2021gshard,fedus2022switch}.
To address the resource constraints of edge deployment, \emph{wireless distributed MoE} (WIDE) frameworks have emerged to enable distributed implementation by exploiting the modular and parallel structure of MoE~\cite{chen2026siftmoe,xue2025wdmoe}.
Specifically, computation-intensive expert modules (i.e., sub-networks) are distributed across edge devices to leverage device-side memory and computation resources, while the remaining model components are placed at the edge server. 
Since MoE inference proceeds through a cascade of layers, such distributed deployment requires repeated server--device interactions. 
During the execution of each layer, the server dispatches hidden states and routing information to activated expert devices, and the devices wirelessly upload their expert outputs to the server for aggregation. 
Existing WIDE designs based on orthogonal multiple access suffer from an uplink communication bottleneck, which arises from the upload of high-dimensional outputs from multiple expert devices. 
To address this issue, we propose \emph{Over-the-Air} MoE (AirMoE), an \emph{over-the-air computing} (AirComp) enabled distributed MoE inference framework that accelerates inference by realizing simultaneous multi-expert access while minimizing channel-induced aggregation errors.

Existing WIDE studies have explored mitigation strategies for the uplink bottleneck under separate digital transmission. 
Representative approaches include reducing the number of activated experts, substituting low-utility experts with more communication-efficient alternatives, and optimizing radio-resource allocation among participating devices~\cite{chen2026siftmoe,xue2025wdmoe,qin2025optimal}. 
Although these designs reduce the upload burden and improve resource utilization, they still require activated experts to compete for finite radio resources due to the assumption of orthogonal multi access. 
The uplink aggregation latency increases approximately with the number of activated experts when their output dimensions and transmission rates are comparable.
For example, state-of-the-art MoE models such as Qwen3-MoE and DeepSeek-V3 route eight experts per token~\cite{yang2025qwen3,deepseekv3}. 
In view of prior work, existing resource management strategies can alleviate but cannot eliminate the uplink bottleneck inherent to orthogonal access.

AirComp provides a promising mechanism to address this limitation~\cite{goldenbaum2013robust}. 
By exploiting the waveform superposition property of a multiple-access channel, AirComp enables concurrent device transmissions and allows the receiver to directly obtain a desired aggregate function, e.g., averaging, weighted summation, and maximization~\cite{zhu2018mimo}. 
The communication efficiency of AirComp has motivated extensive research on wireless edge applications, including federated learning~\cite{wang2024spectrum,sery2021over}, distributed sensing~\cite{wang2026airbreath,liu2023over}, and more recently distributed LLM inference~\cite{zhang2025llm}. 
Nevertheless, AirComp's reliance on uncoded analog transmission makes over-the-air aggregation susceptible to channel fading and interference. 
A broad range of techniques has been developed to reduce the aggregation error through transceiver design and radio resource management, including power control~\cite{cao2020optimized}, beamforming~\cite{chen2018uniform}, interference management ~\cite{cao2021cooperative}, and broadband subcarrier allocation~\cite{qin2021broadband}.
Despite this progress, applying AirComp to distributed MoE inference remains unexplored. 
The motivation for investigating this direction lies in the need for WIDE systems to combine high-dimensional outputs of activated experts when executing each MoE layer.

While generic AirComp provides a promising mechanism, realizing highly accurate and efficient AirComp-enabled MoE inference requires a task-oriented design, namely AirMoE, tailored to address three unique challenges.
The first is \emph{fast-varying gating weights}. 
Unlike conventional AirComp that assumes fixed or slowly varying aggregation coefficients~\cite{goldenbaum2013robust,zhu2018mimo}, MoE inference produces token- and layer-dependent gating scores. 
These fast-varying weights, which change at the token timescale and are tightly coupled with time-varying channels, determine the activated expert subset and their relative contributions. 
This requires the AirComp transceiver to adapt instantaneously upon the input of each token.
The second is \emph{layer-dependent error sensitivity}. Conventional AirComp minimizes the physical-layer \emph{mean squared error} (MSE)~\cite{cao2020optimized,chen2018uniform,liu2020over,zhu2020broadband}, a metric that falls short of capturing \emph{end-to-end} (E2E) MoE inference accuracy~\cite{wen2024task,yang2026mimo,wang2026revisiting}. 
The recovered aggregate at each layer is an intermediate representation whose distortion propagates forward and perturbs subsequent routing decisions~\cite{qiu2025layerwise}. 
Consequently, aggregation errors of equal magnitude can have drastically different effects on the final prediction depending on the layer in which they occur. 
This necessitates an AirComp design that explicitly accounts for such layer-dependent error sensitivity.
The third is \emph{distortion-aware expert placement}, a challenge unique to distributed MoE and absent in standard AirComp designs. 
Wireless edge devices typically exhibit heterogeneous communication and computation capabilities~\cite{yang2026optimal}. 
E2E accuracy fundamentally depends on the assignment of experts to these heterogeneous devices. 
Once activated, an expert's assigned device dictates the channel and power constraints for transmitting its output over the air. 
Therefore, assigning frequently activated experts to devices with weak channels causes persistent aggregation errors. 
While existing MoE placement schemes primarily optimize latency, workload balance, or communication overhead~\cite{go2025moetuner,li2026optimizing,sivtsov2025cluster,chen2026slimcaching,wang2026spacemoe}, AirMoE introduces a novel objective that strategically couples the inference importance of individual experts with device channel heterogeneity.

Addressing these challenges motivates the design of the AirMoE framework. 
AirMoE adopts a one-expert-per-device deployment as the primary system setting to accommodate strict memory constraints at devices and enable parallel expert computation, while we discuss a multi-expert-per-device deployment as an extension. 
Under this setting, we introduce a novel inference-aware AirMoE error metric and develop a two-timescale optimization framework. Specifically, this design optimizes instantaneous power control to adapt to fast-varying gating weights (fast timescale), alongside strategic expert placement (slow timescale). 
The unified objective is to realize simultaneous over-the-air expert-output aggregation while minimizing the resulting degradation in E2E inference accuracy.
Our main contributions and findings are summarized as follows:

\begin{itemize}

    \item \textbf{AirMoE Framework:} We present an AirMoE inference protocol that maps the gating-weighted sum in MoE expert aggregation onto wireless waveform superposition, thereby enabling activated expert outputs to be aggregated directly over the air. To characterize the reliability of AirMoE inference, we define a layer-wise AirMoE error to quantify each aggregation distortion and construct an overall AirMoE error metric through perturbation-based layer-sensitivity calibration. We then formulate a joint power-control and expert-placement problem to minimize the overall AirMoE error. By exploiting the inherent two-timescale decision structure, we decompose this joint problem without loss of optimality into instantaneous power-control and long-term placement subproblems to simplify solution computation.  

    \item \textbf{Optimal Power Control for AirMoE:}
    For each instantaneous AirMoE aggregation, we solve the first subproblem to jointly optimize the device transmit powers and the server denoising factor to minimize the layer-wise AirMoE error. Thereby, we derive a globally optimal threshold-based policy governed by the ratio between each device's channel–power capability and its target gating weight. Specifically, devices satisfying this threshold achieve exact aggregation-weight alignment, whereas the remaining devices transmit at maximum power to minimize the alignment gap. Furthermore, asymptotic analysis demonstrates that for all activated devices, the optimal policy reduces to gating-weighted channel inversion in the high signal-to-noise ratio (SNR) regime and full-power transmission in the low-SNR regime.

    \item \textbf{Activation- and Channel-Aware Expert Placement:}
    Under the preceding optimal power-control policy, we solve the associated subproblem to optimize the long-term expert-to-device mapping and minimize the AirMoE error. For mathematical tractability, we derive an upper bound on the placement objective to serve as a surrogate, which factorizes neatly into two decoupled components: expert importance and device cost. The expert importance encapsulates layer sensitivity, activation frequency, gating weight, and expert-output scale, whereas the device cost characterizes the long-term inverse channel--power capability of each device. Optimizing this surrogate yields a simple placement rule that matches experts in descending order of importance with devices in ascending order of their cost.

    \item \textbf{Experimental Results:}
    Experiments with OLMoE-1B-7B-0924, a popular MoE model, on the language-reasoning dataset under heterogeneous wireless channels validate the proposed AirMoE error as a task-relevant distortion metric. The results show that AirMoE outperforms representative power-control and placement benchmarks in E2E inference accuracy, with larger gains under stronger device heterogeneity.

\end{itemize}

The remainder of this paper is organized as follows. Section~\ref{sec:system_models} introduces the system and communication models, followed by the AirMoE inference protocol and uplink aggregation design in Section~\ref{sec:overview_airmoe_inference}. Section~\ref{sec:airmoe_problem} calibrates layer sensitivity and formulates the two-timescale joint power-control and expert-placement problem.
The optimal power-control policy is derived in Section~\ref{sec:power_control_minimizing_airmixer_error}, while Section~\ref{sec:joint_expert_placement_power_control}  develops the activation- and channel-aware expert-placement strategy. Section~\ref{sec:experimental_results} provides the experimental results, and Section~\ref{sec:conclusion} concludes the paper.

\section{System Models}
\label{sec:system_models}

% \begin{figure}[t]
%     \centering
%     \subfloat[MoE architecture\label{fig:moe_architecture}]{
%         \includegraphics[width=0.47\linewidth]{Images/moe_architecture.pdf}
%     }
%     \hfill
%     \subfloat[Wireless MoE deployment\label{fig:moe_deployment}]{
%         \includegraphics[width=0.47\linewidth]{Images/wireless_distributed_experts.pdf}
%     }
%     \caption{Wireless distributed expert system. }
%     \label{fig:wireless_distributed_expert_model}
%     \vspace{-3mm}
% \end{figure}

We consider a wireless distributed MoE system in which expert networks are deployed across edge devices and coordinated by an edge server to perform token generation services. The distributed architecture for MoE inference and the associated communication patterns are described as follows.

\subsection{Wireless Distributed Expert Model}
\label{subsec:wireless_distributed_expert_model}

\subsubsection{Distributed MoE Architecture}

We first introduce the MoE modules and then describe how they are mapped onto edge networks. As shown in Fig.~\ref{fig:moe_architecture}, the MoE model consists of an embedding layer, $L$ cascaded MoE layers indexed by $\mathcal{L}=\{1,\dots,L\}$, and a language modeling head. Each MoE layer is decomposed into a \emph{gateway module} and an \emph{expert module}.
The gateway module contains the non-expert operations, including self-attention, residual connections, layer-normalization operations, and the gating network. 
It produces the contextual hidden representation and computes the gating scores for expert selection. 
The expert module consists of $I$ parallel experts, i.e., \emph{feed-forward networks} (FFNs), indexed by $\mathcal{I}=\{1,\dots,I\}$.

As shown in Fig.~\ref{fig:airmoe_inference_protocol}, AirMoE maps the MoE modules onto a wireless edge system consisting of an edge server and a set of available wireless devices $\mathcal M=\{1,\dots,M\}$. Specifically, the edge server hosts the $L$ gateway modules and the language modeling head, whereas the embedding layer and experts are deployed across distributed devices. Due to device memory limitations and the need for parallel processing, we consider a one-to-one mapping with $M\ge LI$, where each expert in each MoE layer is hosted by one device~\cite{xue2025wdmoe,chen2026siftmoe,wang2026spacemoe}. To avoid the excessive communication overhead caused by expert migration, the expert-device assignment remains fixed during online inference, with its optimization studied in Section~\ref{sec:joint_expert_placement_power_control}.
% The communication process between the edge server and the devices is specified in Section~\ref{subsec:communication_model}.

\begin{figure}[t]
    \centering
    \includegraphics[width=0.7\linewidth]{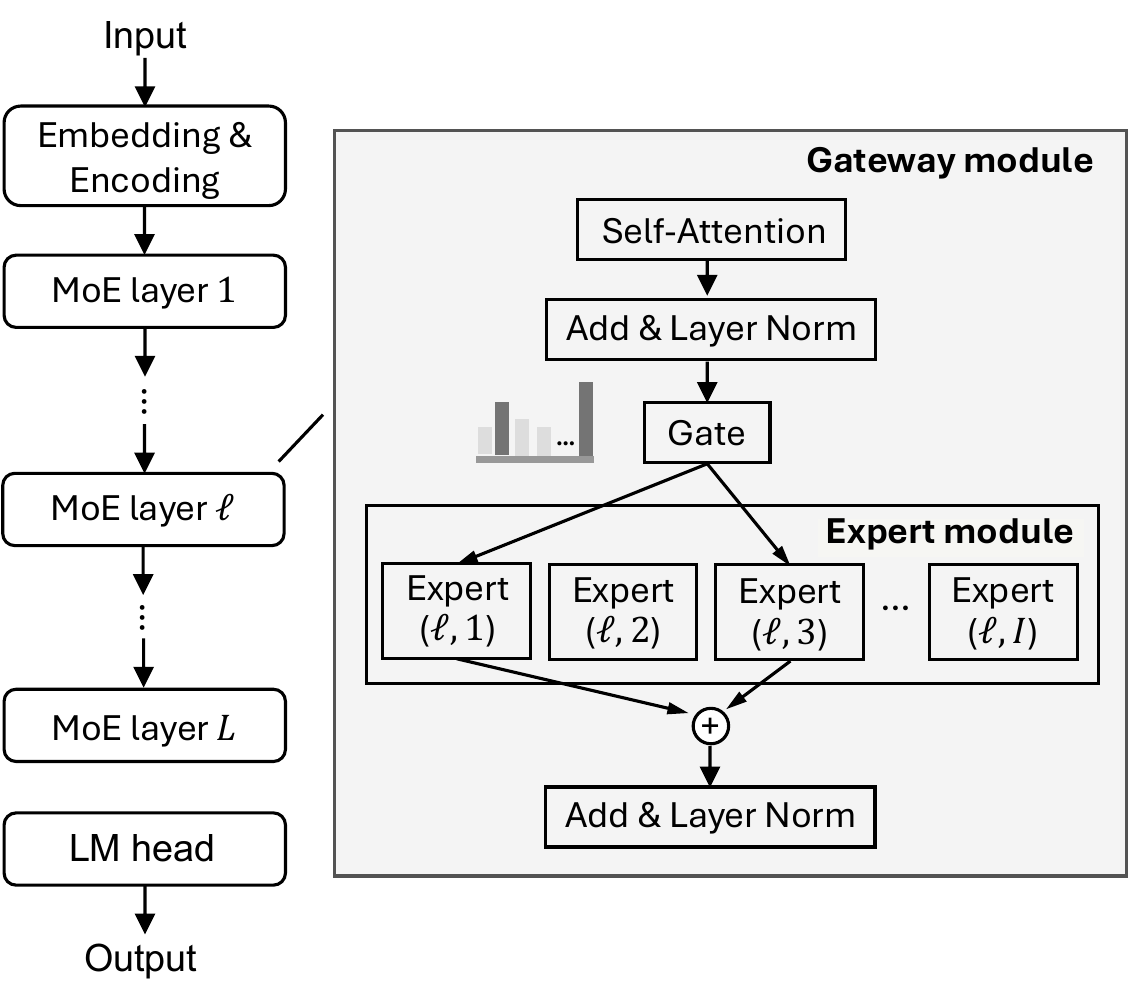}
    \caption{
    MoE model architecture. 
    }
    \label{fig:moe_architecture}
    \vspace{-3mm}
\end{figure}

\subsubsection{Expert Activation and Output Aggregation Model}

Consider that the generation of each token requires propagation through all MoE layers.
We next characterize the expert activation and output aggregation process for an arbitrary token and MoE layer.
The layer and token indices $(\ell,t)$ are omitted for notational simplicity.
Let $\mathbf{u}\in\mathbb{R}^{D}$ denote the hidden state obtained after the self-attention and layer-normalization operations, where $D$ is the hidden dimension. The gating network maps $\mathbf{u}$ to the routing-weight vector $\mathbf{g}=[g_1,\dots,g_I]^{\mathrm{T}}\in\mathbb{R}^{I}$ as \begin{equation} \label{eq:gating_score} \mathbf{g} \triangleq \mathrm{softmax}\!\left(\mathbf{W}_{g}\mathbf{u}\right), \end{equation} where $\mathbf{W}_{g}\in\mathbb{R}^{I\times D}$ is the router parameter matrix, and 
$g_i>0$ denotes the gating score of expert $i$, with $\sum_{i=1}^{I}g_i=1$.
Under a Top-$K$ routing strategy, the $K$ experts with the largest gating scores are activated.
The corresponding activated expert set is denoted by
$S\subseteq\mathcal{I}$, with $|S|=K$.

Each activated expert $i\in S$ computes its local expert output as
$\mathbf{v}_i=\mathrm{FFN}_i(\mathbf{u})\in\mathbb{R}^{D}$.
The expert aggregation output is then obtained by the gating-weighted aggregation of the activated expert outputs\footnote{We adopt the original softmax gating scores in \eqref{eq:moe_ideal_agg}, rather than re-normalizing them over the Top-$K$ selected experts, following recent fine-grained MoE models such as OLMoE~\cite{muennighoff2025olmoe} and Qwen2-MoE~\cite{team2024qwen2}.}, given as
\begin{equation}
\label{eq:moe_ideal_agg}
\mathbf{o}
=
\sum_{i\in S}
g_i
\mathbf{v}_i\in\mathbb{R}^{D}.
\end{equation}
The aggregated output is passed through the subsequent residual and normalization operations before entering the next layer. The ideal aggregation in \eqref{eq:moe_ideal_agg} serves as the reference target for the wireless aggregation mechanism developed in the following sections.

\subsection{Communication Model}
\label{subsec:communication_model}

% \begin{figure*}[t]
%     \centering
%     \includegraphics[width=0.8\linewidth]{Images/airmoe_inference.pdf}
%     \caption{AirMoE inference protocol.}
%     \label{fig:airmoe_inference_protocol}
%     \vspace{-3mm}
% \end{figure*}

In the wireless distributed MoE system, the AirComp technique is considered for approximating the target aggregation in \eqref{eq:moe_ideal_agg}.
Next, we define the relevant symbols and waveform superposition functions.
For an arbitrary expert aggregation with activated expert set $S$, the expert-device mapping induces the corresponding activated device set $\mathcal A\subseteq\mathcal M$.
Each device $m\in\mathcal A$ preprocesses its local expert output into a symbol vector $\mathbf{s}_m\in\mathbb{R}^{D}$,  which spans over $D$ consecutive symbol durations.
We adopt a block-fading uplink channel model, where the channel coefficient of each activated device remains constant during a layer-wise expert aggregation.
The uplink channel coefficient from device $m$ to the edge server is denoted by $h_m$.
The \emph{channel state information} (CSI) is assumed to be available at both the edge server and the transmitting devices~\cite{liu2023over}. 
Under symbol-level synchronization guaranteed by timing advance~\cite{wang2024spectrum,3gpp36213}, all devices in $\mathcal A$  simultaneously upload their expert outputs, and the received signal vector at the edge server is given by
\begin{equation}
\label{eq:uplink_received_signal}
\mathbf{y}
=
\sum_{m\in\mathcal A}
h_m b_m \mathbf{s}_m
+
\mathbf{z},
\end{equation}
where $b_m$ denotes the scalar precoding coefficient of device $m$ and $\mathbf{z}$ denotes the complex additive white Gaussian noise (AWGN).

On the downlink, the server can broadcast the hidden state and routing information to the activated devices through digital or analog transmission~\cite{wang2024spectrum}. 
Given sufficient transmit power at the edge server, the downlink transmission is more reliable than uplink expert-output uploading, and its distortion is therefore neglected in the subsequent analysis.

\begin{figure}[t]
    \centering
    \includegraphics[width=0.8\linewidth]{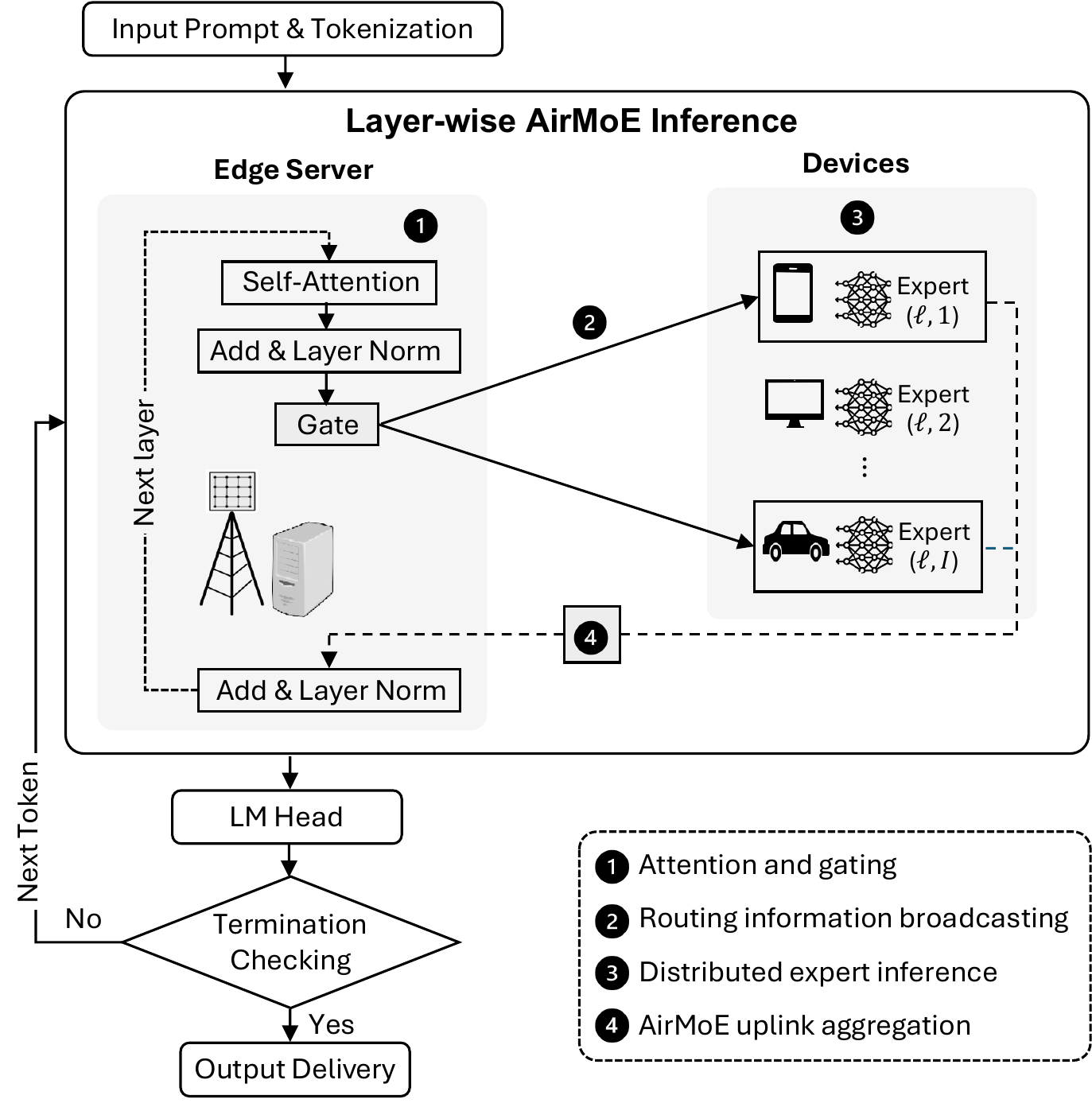}
    \caption{
    AirMoE inference protocol. 
    }
    \label{fig:airmoe_inference_protocol}
    \vspace{-3mm}
\end{figure}

\section{Overview of AirMoE Inference}
\label{sec:overview_airmoe_inference}
In this section, we present an overview of the AirMoE framework, including the inference protocol, over-the-air aggregation, and performance metric.

\subsection{AirMoE Inference Protocol}
\label{subsec:airmoe_inference_protocol}

Based on the wireless distributed MoE architecture, the AirMoE inference protocol is illustrated in Fig.~\ref{fig:airmoe_inference_protocol}. The requesting device first tokenizes the input prompt and obtains the initial token representations through the embedding layer. These representations are transmitted to the edge server, which coordinates the subsequent distributed MoE inference.

For each generated token, AirMoE processes the current representation sequentially through the $L$ MoE layers.
At each MoE layer, the processing proceeds as follows. 
\begin{enumerate}
    \item \textit{Server-side attention and gating:} 
    The edge server performs the gateway computation, in which the self-attention and layer-normalization operations generate the hidden state $\mathbf{u}$, and the gating network computes the gating scores $\{g_i\}_{i\in\mathcal I}$ for all experts. 
    Based on these scores, the edge server selects the Top-$K$ experts and obtains the activated expert set $S$.
    \item \textit{Routing information broadcasting:}
    According to the expert-device mapping, the edge server identifies the devices hosting the activated experts in $S$ and broadcasts the hidden state $\mathbf{u}$ together with the corresponding gating scores $\{g_i\}_{i\in S}$ to these devices.

    \item \textit{Distributed expert inference:}
    Upon receiving the hidden state, each activated device executes its local expert in parallel and obtains the expert output
    $\mathbf{v}_i=\mathrm{FFN}_i(\mathbf{u})$.

    \item \textit{AirMoE uplink aggregation:}
    The activated devices simultaneously transmit their expert outputs to the edge server to realize the gating-weighted aggregation in \eqref{eq:moe_ideal_agg} over the air. The corresponding transceiver design is detailed in the next subsection.
\end{enumerate}

After the token representation is processed by the $L$-th MoE layer, the edge server feeds it into the language-modeling head to produce the next token. The server then checks the predefined stopping rule, such as whether an end-of-sequence token has been generated or the maximum output length has been reached. 
If the stopping rule is not satisfied, the newly generated token is appended to the input sequence, and AirMoE repeats the above inference procedure for the subsequent token.
Otherwise, the edge server performs the required post-processing, e.g., detokenization or formatting, and delivers the generated response to the requesting device.

\begin{figure}[t]
    \centering
    \includegraphics[width=0.9\linewidth]{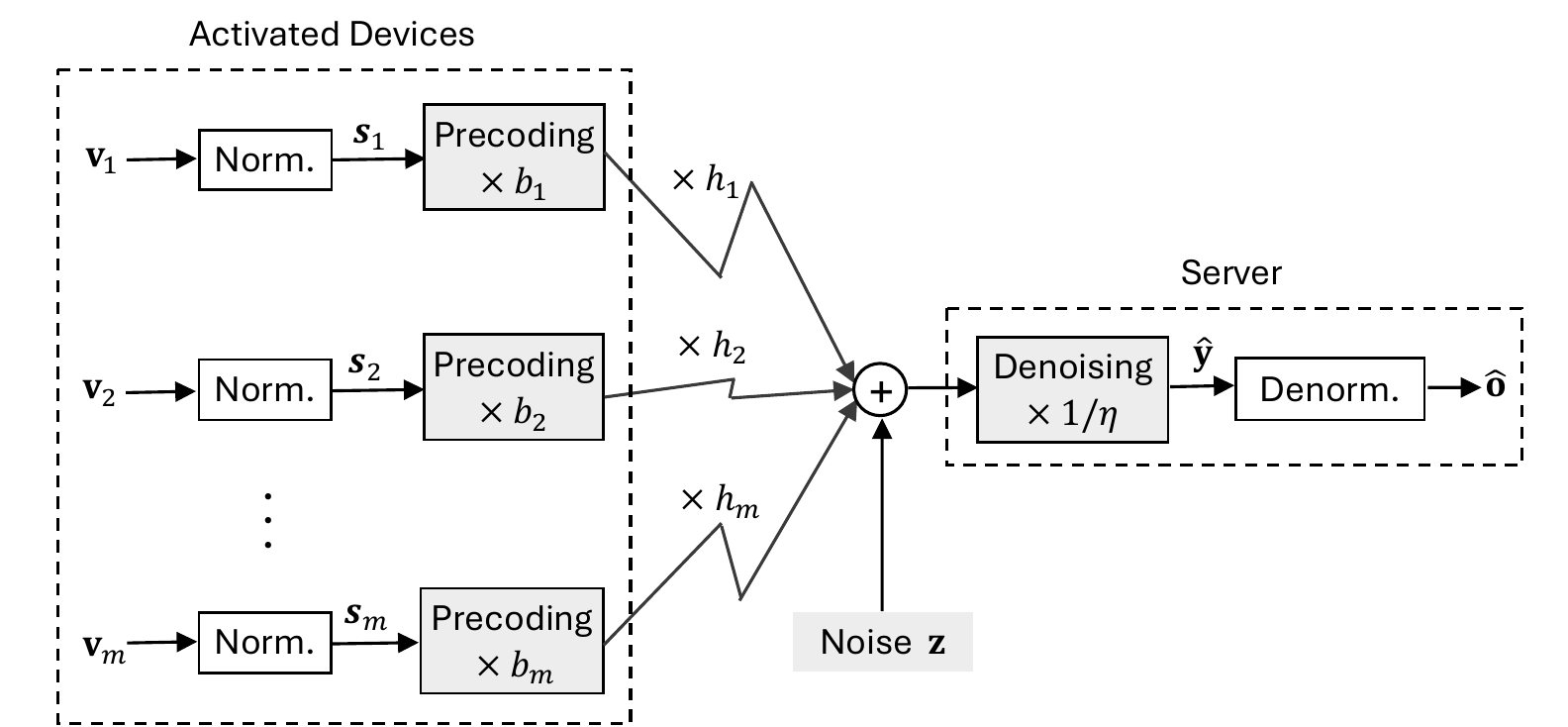}
    \caption{
    AirMoE uplink aggregation. 
    }
    \label{fig:airmoe_uplink_aggregation}
    \vspace{-3mm}
\end{figure}

\subsection{AirMoE Uplink Aggregation}
\label{subsec:airmixer_uplink_ota_expert_aggregation}

The gating-weighted aggregation of expert outputs is performed through uplink AirMoE aggregation as illustrated in Fig.~\ref{fig:airmoe_uplink_aggregation}. The device-side and server-side operations are detailed as follows.

% \begin{figure}[t]
%     \centering
%     \includegraphics[width=0.9\linewidth]{Images/airmoe_uplink_aggragation.pdf}
%     \caption{Uplink AirMoE aggregation process.}
%     \label{fig:airmixer}
%     \vspace{-3mm}
% \end{figure}

\subsubsection{Device-Side Normalization and Precoding}
\label{subsubsec:device_side_precoding}

% The proposed AirMoE aggregation extends the AirComp model in \eqref{eq:uplink_received_signal} with expert-output preprocessing, server-side denoising, and output reconstruction. 
% Through uplink waveform superposition, AirMoE realizes the gating-weighted expert aggregation in \eqref{eq:moe_ideal_agg} as shown in Fig.~\ref{fig:airmixer}.
% The detailed design is presented below.

Consider one expert aggregation at MoE layer $\ell$, with an activated expert set $S$ and the corresponding activated device set $\mathcal A$. 
For each activated device $m\in\mathcal A$, let $\mathbf v_m$ and $g_m$ denote the expert output and the associated gating score, respectively. Then, the ideal aggregation in \eqref{eq:moe_ideal_agg} can be equivalently written as
\begin{equation}
\label{eq:moe_ideal_device_form}
\mathbf{o}
=
\sum_{m\in\mathcal A}
g_m
\mathbf v_m
\in\mathbb{R}^{D}.
\end{equation}

Following the literature~\cite{wang2024spectrum,wang2026airbreath,liu2023over}, to facilitate transmit power control, AirMoE applies a layer-wise normalization to the expert outputs.
Specifically, device $m\in\mathcal A$ constructs the transmit symbol vector as
\begin{equation}
\label{eq:layer_norm_affine}
\mathbf{s}_m \triangleq \frac{ \mathbf v_m - \boldsymbol{\mu}_{\ell} } {c_{\ell}},
\end{equation}
where $\boldsymbol{\mu}_{\ell}=\mathbb{E}[\mathbf v_m]$ and 
$c_{\ell}^{2}
=
\frac{1}{D}
\mathbb{E}\!\left[
\left\|
\mathbf v_m
-
\boldsymbol{\mu}_{\ell}
\right\|_{2}^{2}
\right]$
denote the layer-wise statistical mean and variance of $\mathbf v_m$, respectively.
The expectation is taken over tokens and activated experts at layer $\ell$.
These statistics are estimated offline from representative training samples and shared with the corresponding devices before inference.
The resulting transmit symbol vectors are normalized as
$\mathbb{E}\!\left[\mathbf{s}_m\right]=\mathbf{0}$, $\frac{1}{D}
\mathbb{E}\left[
\|\mathbf{s}_m\|_2^2
\right]
=1.$

To cope with fading channels, device $m\in\mathcal A$ applies a scalar precoder $b_m$ to its normalized symbol vector. With CSI available at the transmitting devices, AirMoE jointly compensates for the channel phase and controls the transmit power by designing the scalar precoder as
\begin{equation}
\label{eq:phase_compensated_precoder}
b_m
=
\sqrt{p_m}\frac{h_m^{\dagger}}{|h_m|},
\qquad
m\in\mathcal A,
\end{equation}
where $p_m\ge 0$ denotes the transmit power of device $m$ and $(\cdot)^\dagger$ denotes the complex conjugate.
This precoder yields the real non-negative effective coefficient $h_m b_m=|h_m|\sqrt{p_m}$ at the edge server.
The transmission of each device is subject to a per-symbol transmit-power constraint $P_m$. Under the unit-power normalization of $\mathbf{s}_m$, this constraint becomes 
\begin{equation} \label{eq:power_constraint} \frac{1}{D} \mathbb{E} \left[ \left\| b_m\mathbf{s}_m \right\|_2^2 \right] = |b_m|^2 = p_m \le P_m, \qquad m\in\mathcal A . \end{equation}

\subsubsection{Server-side Denoising and Reconstruction}
\label{subsubsec:server_side_denoising}

After normalization and precoding,
all activated devices transmit their symbol vectors simultaneously over the uplink channel.
Following the AirComp model in \eqref{eq:uplink_received_signal}, the edge server then applies the denoising factor $\eta$ and extracts the real aggregation component as
\begin{equation}
\label{eq:denoise}
\hat{\mathbf y}
=
\operatorname{Re}
\left\{
\frac{\mathbf y}{\eta}
\right\}
=
\sum_{m\in\mathcal A} \frac{|h_m|\sqrt{p_m}}{\eta} \mathbf s_m + \frac{\mathbf z_R}{\eta},
\end{equation}
where $\mathbf z_R\triangleq\operatorname{Re}\{\mathbf z\}$ denotes the effective real-domain receiver noise with per-dimension variance $\sigma^2$.
The denoising factor controls the scale of the received superposition and the effective noise level after aggregation.

The denoised aggregation $\hat{\mathbf{y}}$ is then mapped back to the original expert-output domain by reversing the affine normalization:
\begin{equation}
\label{eq:reconstruct_x}
\hat{\mathbf{o}} = \left( \sum_{m\in\mathcal A} g_m \right) \boldsymbol{\mu}_{\ell} + c_{\ell} \hat{\mathbf{y}}.
\end{equation}
The reconstructed output $\hat{\mathbf{o}}$ serves as the AirMoE approximation of the ideal MoE expert-block output and is passed to the subsequent residual and normalization operations before entering the next layer.

\subsection{Layer-wise AirMoE Error}
\label{subsec:airmixer_error}

To quantify the distortion introduced by AirMoE aggregation at layer $\ell$, we define the layer-wise AirMoE error as the per-dimension \emph{mean squared error} (MSE) between the reconstructed aggregation output $\hat{\mathbf{o}}$ in \eqref{eq:reconstruct_x} and the ideal aggregation output $\mathbf{o}$ in \eqref{eq:moe_ideal_agg}, given by
\begin{equation}
\begin{aligned}
\label{eq:mse}
\mathrm{MSE}_{\ell}
&\triangleq
\frac{1}{D}
\mathbb{E}
\left[
\left\|
\hat{\mathbf{o}}
-
\mathbf{o}
\right\|_2^2
\right]= \frac{c_{\ell}^{2}}{D} \mathbb{E} \left[ \left\| \hat{\mathbf{y}} - \sum_{m\in\mathcal A} g_m\mathbf{s}_m \right\|_2^2 \right] \\
&=
\frac{c_{\ell}^{2}}{D} \mathbb{E} \Bigg[ \Bigg\| \sum_{m\in\mathcal A} \left( \frac{|h_m|\sqrt{p_m}}{\eta} - g_m \right) \mathbf{s}_m + \frac{\mathbf{z}_R}{\eta} \Bigg\|_2^2 \Bigg],
\end{aligned}
\end{equation}
where the expectation is taken over the normalized expert-output symbols and receiver noise, conditioned on the activated device set, gating scores, channel realization, and AirMoE control variables. As revealed by the second equality, the AirMoE aggregation $\hat{\mathbf y}$ approximates the ideal expert aggregation in the normalized domain, denoted by
$\mathbf  y_{tar}
\triangleq
\sum_{m\in\mathcal A} g_m\mathbf s_m$.
Using the unit-power normalization of $\mathbf{s}_m$, 
the independence between expert-output symbols and receiver noise, and assuming uncorrelated normalized symbols across activated experts, \eqref{eq:mse} reduces to
\begin{equation}
\label{eq:closed_form_mse}
\mathrm{MSE}_{\ell}
=
c_\ell^2
\Bigg(
\underbrace{
\sum_{m\in\mathcal A}
\left(
\frac{|h_m|\sqrt{p_m}}{\eta}
-
g_m
\right)^2
}_{\text{coefficient mismatch}}
+
\underbrace{
\frac{\sigma^2}{\eta^2}
}_{\text{effective noise}}
\Bigg).
\end{equation}
The first term in \eqref{eq:closed_form_mse} measures the mismatch between the effective wireless aggregation coefficients and the desired gating scores, while the second term represents the receiver noise after denoising.

In practical MoE inference, the normalized expert-output symbols of different activated experts may be statistically correlated, leading to additional cross-correlation terms in the exact MSE expansion.
Following the AirComp analysis in~\cite{cao2020optimized}, these terms can be bounded via Cauchy's inequality, resulting in an upper bound proportional to \eqref{eq:closed_form_mse}. 
Therefore, \eqref{eq:closed_form_mse} is adopted as a tractable per-aggregation distortion measure, which serves as the basis for the inference-level objective developed in the next section.

\begin{Remark}[Comparison with Conventional AirComp Error]
\label{rm:difference_from_aircomp}
Unlike conventional AirComp error, where aggregation weights are typically fixed and participating devices are externally given, the layer-wise AirMoE error is coupled with MoE inference. The desired coefficients are token- and layer-dependent gating scores, and the activated devices are determined by both routing and expert placement. As shown in \eqref{eq:closed_form_mse}, the error depends on the transmit powers, the denoising factor, and the channel--power capabilities of the devices hosting the activated experts. This coupling motivates the joint design of AirMoE power control and expert placement.
\end{Remark}

\section{AirMoE Problem Formulation and Decomposition}
\label{sec:airmoe_problem}
Building on the layer-wise AirMoE error, this section calibrates layer sensitivity and defines an inference-level distortion metric for AirMoE. Based on this metric, we formulate the joint expert-placement and power-control problem and show that it can be decomposed without loss of optimality by exploiting the two-timescale structure.

\subsection{Layer Sensitivity of AirMoE}
\label{subsec:from_airmixer_error_to_inference_error}

\begin{figure}[t]
    \centering
    \includegraphics[width=0.7\linewidth]{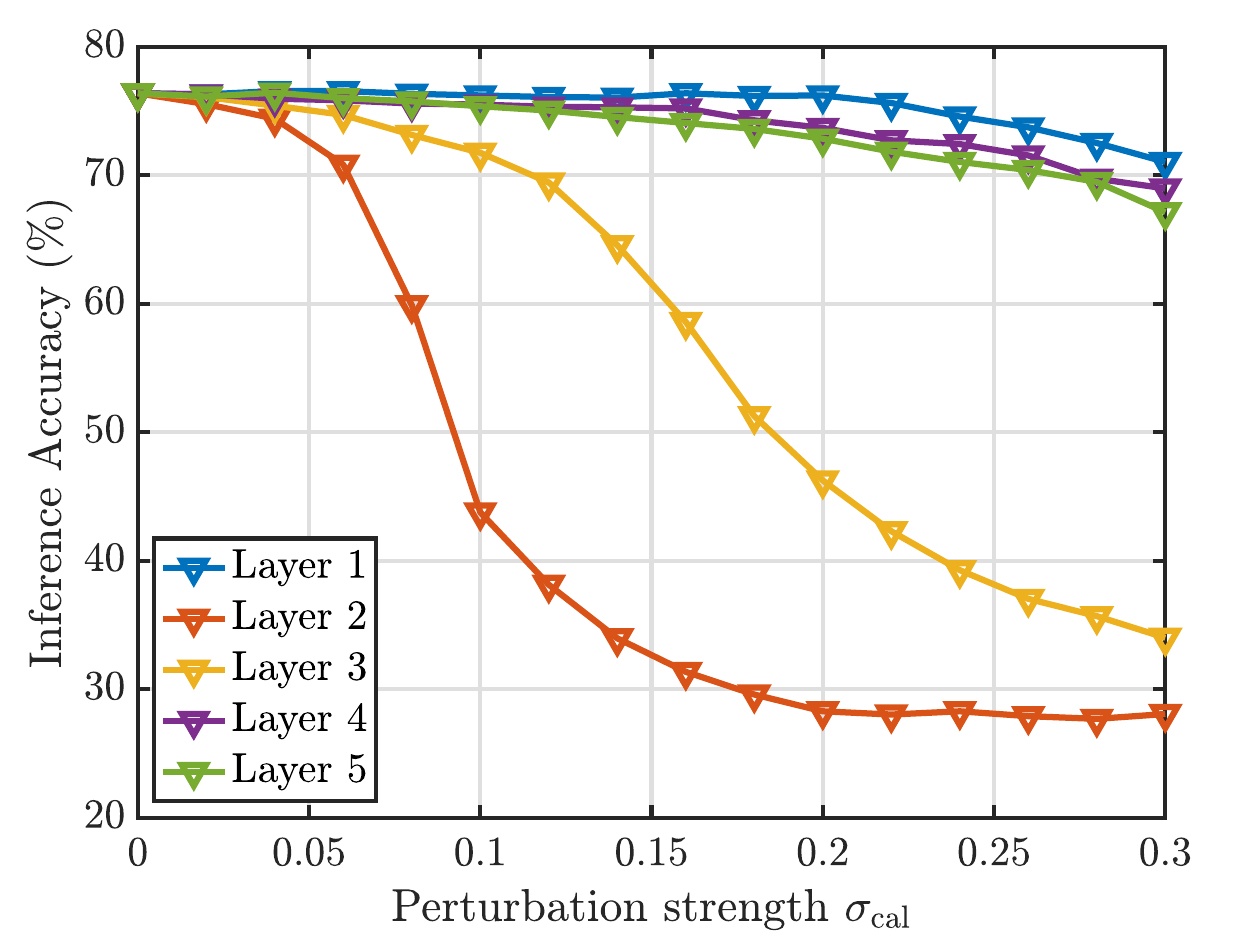}
    \caption{
    Layer-sensitivity calibration for MoE layers of OLMoE model. 
    }
    \label{fig:layer_sensitivity_calibration}
    \vspace{-3mm}
\end{figure}

The layer-wise AirMoE error in \eqref{eq:mse} measures the distortion of a single over-the-air expert aggregation.
For MoE inference, however, E2E performance is affected by the aggregate distortion over multiple tokens and multiple MoE layers. 
Moreover, different layers may exhibit different sensitivities to the same level of aggregation distortion. 
Such layer-dependent sensitivity arises from their distinct roles in representation transformation, heterogeneous routing and gating distributions, and different propagation depths before the perturbations affect the final inference output~\cite{bandarkar2025multilingual,chen2026understanding}.
To capture this inference-level effect, we restore the token and layer indices, where superscript $(t)$ and subscript $\ell$ refer to token $t$ and MoE layer $\ell$, respectively.
We define the overall AirMoE error, denoted by $\mathcal E_{\mathrm A}$, as a layer-sensitivity-weighted long-term aggregation distortion metric, given by
\begin{equation}
\label{eq:high_level_long_term_metric}
\mathcal E_{\mathrm A}
\triangleq
\mathbb{E}
\left[
\sum_{\ell=1}^{L}
a_{\ell}
\mathrm{MSE}_{\ell}^{(t)}
\right]
=
\sum_{\ell=1}^{L}
a_{\ell}
\mathbb{E}
\left[
\mathrm{MSE}_{\ell}^{(t)}
\right],
\end{equation}
where $a_{\ell}\ge 0$ denotes the sensitivity coefficient of layer $\ell$ to aggregation distortion. 
The expectation is taken over input tokens, router-induced expert activations and gating scores, and channel fading realizations. 
$\mathcal E_{\mathrm A}$ provides a tractable objective whose minimization is expected to better preserve over-the-air MoE inference performance.

The layer sensitivities $\{a_\ell\}_{\ell=1}^{L}$ are calibrated offline through controlled perturbation experiments. 
For each layer $\ell$, we evaluate a calibration grid of perturbation strengths
$\mathcal G_{\mathrm{cal}}
=
\left\{
\sigma_{\mathrm{cal},1},
\sigma_{\mathrm{cal},2},
\dots,
\sigma_{\mathrm{cal},J}
\right\}$
and measure the resulting accuracy degradation while keeping all other layers clean. 
This isolates the sensitivity of layer $\ell$ to AirMoE distortion. 
Specifically, for each $\sigma_{\mathrm{cal},j}\in\mathcal G_{\mathrm{cal}}$,  we apply the perturbation only to the normalized aggregation targets $\mathbf{y}_{\ell,\mathrm{tar}}^{(t)}$ of all tokens at layer $\ell$. The resulting perturbed aggregation target, denoted by $\tilde{\mathbf{y}}_{\ell,\mathrm{tar}}^{(t)}(\sigma_{\mathrm{cal},j})$, is given by
\begin{equation}
\label{eq:layer_calib_perturb}
\tilde{\mathbf{y}}_{\ell,\mathrm{tar}}^{(t)}(\sigma_{\mathrm{cal},j})
=
\mathbf{y}_{\ell,\mathrm{tar}}^{(t)}
+
\sigma_{\mathrm{cal},j}
\boldsymbol{\epsilon}_{\ell}^{(t)},
\quad
\boldsymbol{\epsilon}_{\ell}^{(t)}
\sim
\mathcal{N}(\mathbf{0},\mathbf{I}),
\end{equation}
where $\boldsymbol{\epsilon}_{\ell}^{(t)}$ denotes a standard Gaussian perturbation vector. 
The scalar $\sigma_{\mathrm{cal},j}$ controls the perturbation strength, i.e., the standard deviation of the injected Gaussian perturbation in the normalized aggregation domain. 
The perturbed result replaces the clean aggregation result at layer $\ell$ and is propagated through the remaining network for accuracy evaluation.
Fig.~\ref{fig:layer_sensitivity_calibration} illustrates the layer-sensitivity calibration results for representative MoE layers. 
The markedly different accuracy-degradation patterns reveal heterogeneous layer sensitivity to aggregation-domain perturbations, supporting the layer-dependent weighting in \eqref{eq:high_level_long_term_metric}.

Let $\mathrm{Acc}_{\ell}(\sigma_{\mathrm{cal},j})$ denote the task accuracy obtained when perturbation strength $\sigma_{\mathrm{cal},j}$ is applied to layer $\ell$, and let $\mathrm{Acc}_{0}$ denote the clean accuracy.
Given a target relative accuracy drop $\rho$, we assign each layer an unnormalized sensitivity coefficient, denoted by $\tilde{a}_{\ell}$, according to the minimum perturbation strength that causes the prescribed accuracy degradation:
\begin{equation}
\label{eq:unnormalized_al}
\tilde{a}_{\ell}
\triangleq
\left[
\min
\left\{
\sigma_{\mathrm{cal},j}\in\mathcal G_{\mathrm{cal}}:
\mathrm{Acc}_{\ell}(\sigma_{\mathrm{cal},j})
\le
(1-\rho)\mathrm{Acc}_{0}
\right\}
\right]^{-1}.
\end{equation}
A layer whose accuracy reaches the target degradation under a smaller perturbation strength is more sensitive to AirMoE distortion and is therefore assigned a larger coefficient $\tilde{a}_{\ell}$.

To preserve the relative layer sensitivities without changing the average weighting scale across MoE layers, we normalize these coefficients and obtain the layer sensitivity $a_{\ell}$ as
\begin{equation}
\label{eq:al_normalization}
a_{\ell}
=
\frac{
L\tilde{a}_{\ell}
}
{
\sum_{r=1}^{L}
\tilde{a}_{r}
},
\qquad
\frac{1}{L}
\sum_{\ell=1}^{L}
a_{\ell}
=
1.
\end{equation}
Consequently, more distortion-sensitive layers receive larger weights, while the average weight across all MoE layers remains one.

\subsection{Joint Placement and Power-Control Problem}
\label{subsec:joint_problem}

Having defined the overall AirMoE error to characterize inference-relevant aggregation distortion, we now formulate the joint AirMoE design problem for preserving E2E inference performance. The layer-wise AirMoE error in \eqref{eq:closed_form_mse} depends on both the long-term expert-to-device placement and the instantaneous power-control variables. Expert placement is a slow-timescale decision that determines which device hosts each expert and is reused across many inference operations, while power control adjusts the transmit powers and denoising factor for each routing and channel realization. To formulate this joint design, we rewrite the layer-wise AirMoE error in \eqref{eq:closed_form_mse} from the activated-device notation into an expert-indexed form that explicitly depends on placement.

To describe the long-term expert-to-device association, we introduce the binary placement variable $x_{\ell,i,m}$, where $x_{\ell,i,m}=1$ indicates that expert $i$ at layer $\ell$ is placed on device $m$, and $x_{\ell,i,m}=0$ otherwise. We collect all placement variables as
\begin{equation}
\label{eq:placement_constraint_binary}
\mathbf X
\triangleq
\left\{
x_{\ell,i,m}\in\{0,1\}
:
\ell\in\mathcal L,\
i\in\mathcal I,\
m\in\mathcal M
\right\}.
\end{equation}
Under the one-expert-per-device deployment, each expert is assigned to exactly one device:
\begin{equation}
\label{eq:placement_constraint_expert}
\sum_{m\in\mathcal{M}}
x_{\ell,i,m}=1,
\qquad
\forall \ell\in\mathcal{L},\ i\in\mathcal{I},
\end{equation}
and each device hosts at most one expert:
\begin{equation}
\label{eq:placement_constraint_device}
\sum_{\ell\in\mathcal{L}}
\sum_{i\in\mathcal{I}}
x_{\ell,i,m}
\le
1,
\qquad
\forall m\in\mathcal{M}.
\end{equation}

For token $t$ at layer $\ell$, let $S_{\ell}^{(t)}$ denote the activated expert set and let $g_{\ell,i}^{(t)}$ denote the gating score of expert $i$. Given placement $\mathbf{X}$, the device hosting expert $(\ell,i)$ determines its uplink channel and transmit-power budget. To make this placement dependence explicit, we define the placement-induced channel magnitude of expert $(\ell,i)$, denoted by $H_{\ell,i}^{(t)}(\mathbf X)$, as
\begin{equation}
\label{eq:effective_channel_under_placement}
H_{\ell,i}^{(t)}(\mathbf{X})
\triangleq
\sum_{m\in\mathcal{M}}
x_{\ell,i,m}
\left|h_m^{(t)}\right|,
\end{equation}
and the corresponding power budget $P_{\ell,i}(\mathbf X)$  as
\begin{equation}
\label{eq:effective_power_budget_under_placement}
P_{\ell,i}(\mathbf{X})
\triangleq
\sum_{m\in\mathcal{M}}
x_{\ell,i,m}
P_m .
\end{equation}
We collect the transmit powers of the activated experts into the vector $\mathbf p_{\ell}^{(t)}$, given by
\begin{equation}
\label{eq:power_vector_def}
\mathbf{p}_{\ell}^{(t)}
\triangleq
\left\{
p_{\ell,i}^{(t)}:
i\in S_{\ell}^{(t)}
\right\}.
\end{equation}
The server-side denoising factor is denoted by $\eta_{\ell}^{(t)}$.

With these quantities,  we write the layer-wise AirMoE error as a function of the placement and power-control variables, denoted by
$\mathrm{MSE}_{\ell}^{(t)}\left(\mathbf X,\mathbf p_{\ell}^{(t)},\eta_{\ell}^{(t)}\right)$. It is equivalently obtained from \eqref{eq:closed_form_mse} as
\begin{equation}
\label{eq:mse_under_placement}
\begin{aligned}
&
\mathrm{MSE}_{\ell}^{(t)}
\left(
\mathbf{X},
\mathbf{p}_{\ell}^{(t)},
\eta_{\ell}^{(t)}
\right)=\\&
c_{\ell}^{2}
\left[
\sum_{i\in S_{\ell}^{(t)}}
\left(
\frac{
H_{\ell,i}^{(t)}(\mathbf{X})
\sqrt{p_{\ell,i}^{(t)}}
}
{
\eta_{\ell}^{(t)}
}
-
g_{\ell,i}^{(t)}
\right)^{2}
+
\frac{\sigma^{2}}
{
\left(\eta_{\ell}^{(t)}\right)^{2}
}
\right].
\end{aligned}
\end{equation}
% Equation~\eqref{eq:mse_under_placement} explicitly shows that how placement affects the AirMoE error through device-dependent channels and power budgets.
By substituting \eqref{eq:mse_under_placement} into the AirMoE error in \eqref{eq:high_level_long_term_metric},
the joint placement and power-control problem is formulated as
\begin{equation}
\label{eq:joint_placement_power_problem}
\begin{aligned}
\min_{\mathbf{X},\,\{\mathbf{p}_{\ell}^{(t)},\eta_{\ell}^{(t)}\}}
\quad
&
\sum_{\ell=1}^{L}
a_{\ell}
\mathbb{E}
\left[
\mathrm{MSE}_{\ell}^{(t)}
\left(
\mathbf{X},
\mathbf{p}_{\ell}^{(t)},
\eta_{\ell}^{(t)}
\right)
\right]
\\
\mathrm{s.t.}
\quad
&
0\le p_{\ell,i}^{(t)}
\le
P_{\ell,i}(\mathbf{X}),
\qquad
i\in S_{\ell}^{(t)},\ \forall \ell,t,
\\
&
\eta_{\ell}^{(t)}>0,
\qquad
\forall \ell,t,
\\
&
\eqref{eq:placement_constraint_binary},
\eqref{eq:placement_constraint_expert},
\eqref{eq:placement_constraint_device}.
\end{aligned}
\end{equation}
Problem~\eqref{eq:joint_placement_power_problem} is a mixed-integer stochastic optimization problem with a two-timescale decision structure. The placement variable $\mathbf X$ is a long-term binary decision fixed before online inference, whereas $\mathbf p_{\ell}^{(t)}$ and $\eta_{\ell}^{(t)}$ are continuous recourse variables adapted to each routing and channel realization. Directly solving this problem is challenging due to the nested coupling between the combinatorial expert-placement decision and the scenario-dependent power-control policy under the long-term expectation. To make the problem tractable, we exploit the two-timescale structure and develop an equivalent decomposition in the next subsection.

\subsection{Two-Timescale Decomposition}
\label{subsec:exact_decomposition}

We next decompose problem~\eqref{eq:joint_placement_power_problem} by exploiting its two-timescale structure. The key observation is that, for any fixed placement $\mathbf X$, the power-control variables are adapted to each instantaneous routing and channel realization. Since both the layer-wise AirMoE error and the feasible set are local to each layer-token aggregation, with no coupling across tokens, layers, or channel realizations, the online power-control variables can be optimized pointwise for each instantaneous AirMoE aggregation.

For a given placement $\mathbf X$, the corresponding per-aggregation power-control problem is formulated as
\begin{equation}
\label{eq:min_mse}
\begin{aligned}
\min_{\mathbf p_{\ell}^{(t)},\eta_{\ell}^{(t)}}
\quad
&
\mathrm{MSE}_{\ell}^{(t)}
\left(
\mathbf X,
\mathbf p_{\ell}^{(t)},
\eta_{\ell}^{(t)}
\right)
\\
\mathrm{s.t.}
\quad
&
0\le p_{\ell,i}^{(t)}
\le
P_{\ell,i}(\mathbf X),
\qquad
i\in S_{\ell}^{(t)},
\\
&
\eta_{\ell}^{(t)}>0.
\end{aligned}
\end{equation}
The minimization in \eqref{eq:min_mse} is conditioned on the current activated expert set, gating scores, and channel realization. Its optimal objective value is denoted by $\mathrm{MSE}_{\ell}^{(t),\star}(\mathbf X)$, whose dependence on $\mathbf X$ arises through the placement-induced channel magnitudes and power budgets.

Substituting the optimal value in \eqref{eq:min_mse} into \eqref{eq:joint_placement_power_problem} yields the following long-timescale expert-placement problem:
\begin{equation}
\label{eq:min_placement}
\begin{aligned}
\min_{\mathbf X}
\quad
&
\sum_{\ell=1}^{L}
a_{\ell}
\mathbb E
\left[
\mathrm{MSE}_{\ell}^{(t),\star}(\mathbf X)
\right]\\
\mathrm{s.t.}
\quad
&
\eqref{eq:placement_constraint_binary},
\eqref{eq:placement_constraint_expert},
\eqref{eq:placement_constraint_device}.
\end{aligned}
\end{equation}

\begin{Proposition}[Optimality of Two-Timescale Decomposition]
\label{prop:problem_decomposition}
Problem~\eqref{eq:joint_placement_power_problem} has the same optimal objective value as the nested optimization consisting of the inner power-control problem in \eqref{eq:min_mse} and the outer placement problem in \eqref{eq:min_placement}. 
\end{Proposition}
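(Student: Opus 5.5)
We prove the statement by a two-sided inequality between optimal values, using the interchangeability of minimization and expectation for decoupled recourse variables. The approach is the standard stochastic-programming argument: for fixed $\mathbf X$, the inner variables $\{\mathbf p_\ell^{(t)},\eta_\ell^{(t)}\}$ are policies, i.e., mappings from the realization $\omega_\ell^{(t)}\triangleq(S_\ell^{(t)},\{g_{\ell,i}^{(t)}\},\{h_m^{(t)}\})$ to feasible controls. The objective of \eqref{eq:joint_placement_power_problem} is a nonnegatively weighted sum ($a_\ell\ge 0$) of expectations of nonnegative functions. Each constraint set $\{0\le p_{\ell,i}^{(t)}\le P_{\ell,i}(\mathbf X),\ \eta_\ell^{(t)}>0\}$ depends only on $\mathbf X$ and the current realization. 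Hence there is no coupling across $(\ell,t)$ or across realizations.

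\emph{Step 1 (lower bound).} Fix any feasible $\mathbf X$ and any feasible policy. For every realization, feasibility in \eqref{eq:min_mse} gives
\[
\mathrm{MSE}_\ell^{(t)}(\mathbf X,\mathbf p_\ell^{(t)},\eta_\ell^{(t)})\ge \mathrm{MSE}_\ell^{(t),\star}(\mathbf X).
\]
Taking expectations (monotonicity), weighting by $a_\ell\ge0$, and summing over $\ell$ shows that the objective of \eqref{eq:joint_placement_power_problem} is at least the objective of \eqref{eq:min_placement} at $\mathbf X$, which is at least the optimal value of \eqref{eq:min_placement}. Taking the infimum over $(\mathbf X,\text{policy})$ yields one inequality.

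\emph{Step 2 (upper bound).} Fix $\mathbf X$ and construct a pointwise policy. The infimum in \eqref{eq:min_mse} may fail to be attained, since $\eta>0$ is an open constraint. We therefore work with $\epsilon$-optimal controls: for each realization choose feasible $(\mathbf p,\eta)$ with $\mathrm{MSE}\le \mathrm{MSE}^{\star}(\mathbf X)+\epsilon$. Alternatively, we can note that for fixed $\eta$ the problem is a box-constrained quadratic in $\sqrt{p}$, so the value is finite and achieved or approached. Substituting this policy gives a joint objective at most $\sum_\ell a_\ell\mathbb E[\mathrm{MSE}_\ell^{(t),\star}(\mathbf X)]+\epsilon\sum_\ell a_\ell$. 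Letting $\epsilon\to0$ and then minimizing over the finite feasible set of placements gives the reverse inequality. The finiteness of the placement set also ensures that the outer minimum exists.

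\emph{Main obstacle: measurability.} The delicate point is that the pointwise (near-)optimal selection must be a measurable function of the realization, so that the expectation in Step 2 is well defined. I would handle this concretely rather than abstractly. The map $(\mathbf p,\eta)\mapsto \mathrm{MSE}$ is continuous in both the controls and the realization data. For fixed $\eta$, the optimal $\sqrt{p_{\ell,i}}$ equals the projection $\min\{\eta g_{\ell,i}/H_{\ell,i},\sqrt{P_{\ell,i}}\}$, which is continuous, so it suffices to select $\eta$. One could invoke a measurable selection theorem, e.g., for normal integrands (Rockafellar--Wets, Theorem 14.60). Alternatively, one could restrict $\eta$ to a countable dense set and pick the first $\epsilon$-optimal element, which is measurable by construction. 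Either route justifies exchanging $\min$ and $\mathbb E$ and completes the proof.
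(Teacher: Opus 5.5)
Your proposal is correct and follows the same route as the paper's proof. Both establish the pointwise lower bound $\mathrm{MSE}_{\ell}^{(t)}\ge\mathrm{MSE}_{\ell}^{(t),\star}(\mathbf X)$, integrate it with weights $a_\ell\ge 0$, show it is attained by a pointwise optimal power-control policy, and then minimize over the finite placement set; your $\epsilon$-optimal selection and measurability discussion add rigor the paper omits, since it simply assumes a pointwise minimizer can be selected (one exists, given in closed form by Theorem~\ref{thm:global_policy_fixedA}, for almost every channel realization).
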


\noindent\textit{Proof:} See Appendix~\ref{subsec:problem_decomposition}.

Proposition~\ref{prop:problem_decomposition} reduces the AirMoE design to two timescale-separated subproblems. The per-aggregation power-control problem in \eqref{eq:min_mse} is first solved in Section~\ref{sec:power_control_minimizing_airmixer_error}. Then, based on the resulting optimal aggregation distortion, the expert-placement problem in \eqref{eq:min_placement} is addressed in Section~\ref{sec:joint_expert_placement_power_control}.

\section{Power Control for AirMoE}
\label{sec:power_control_minimizing_airmixer_error}

In this section, we study the short-timescale power-control problem in \eqref{eq:min_mse} by jointly optimizing the device transmit powers and the server-side denoising factor. We then characterize the structure of the optimal policy and its behavior under different SNR regimes.

\subsection{Power-Control Problem}
\label{subsec:power_control_problem}
With the expert placement fixed, each activated expert is uniquely associated with an activated device.
For notational clarity, we therefore rewrite the expert-indexed formulation in Section~\ref{sec:airmoe_problem} using the activated-device notation introduced in Section~\ref{sec:overview_airmoe_inference}.
Consider one AirMoE aggregation at layer $\ell$ over the activated-device set $\mathcal A$, with gating scores $\{g_m\}$, channel coefficients $\{h_m\}$, and power budgets $\{P_m\}$ for $m\in\mathcal A$.
The inner problem in \eqref{eq:min_mse} then reduces to optimizing the transmit powers $\{p_m\}$ and the denoising factor $\eta$.
Since the layer-dependent factor $c_\ell^2$ in \eqref{eq:closed_form_mse} is fixed for the considered aggregation, it does not affect the optimizer and is omitted in this section.
Thus, the power-control problem in \eqref{eq:min_mse} can be equivalently written as
\begin{equation} \label{eq:pc_problem_original} \begin{aligned} 
\min_{\{p_m\},\,\eta} \quad & \sum_{m\in\mathcal A} \left( \frac{|h_m|\sqrt{p_m}}{\eta} - g_m \right)^2 + \frac{\sigma^2}{\eta^2} \\ \mathrm{s.t.} \quad & 0\le p_m\le P_m, \qquad m\in\mathcal A \\&\eta>0 . \end{aligned} 
\end{equation}
The objective in \eqref{eq:pc_problem_original} reveals the roles of the two control variables $\{p_m\}$ and $\eta$. The transmit powers $\{p_m\}$ tune the effective aggregation coefficients $|h_m|\sqrt{p_m}/\eta$ toward the desired gating scores $\{g_m\}$ under individual power budgets.
The denoising factor $\eta$ governs the tradeoff between coefficient matching and noise suppression:
increasing $\eta$ suppresses the effective noise term $\sigma^2/|\eta|^2$ but also shrinks the achievable coefficient range, which may cause residual mismatch for devices with limited channel--power capability.

Problem~\eqref{eq:pc_problem_original} can be solved by exact variable elimination. For any fixed denoising factor $\eta$, the noise term is constant, and the objective is separable with respect to the transmit powers ${p_m}$ across activated devices. Hence, the optimal transmit powers can be first derived as functions of $\eta$. Substituting these optimal powers back into the objective yields a one-dimensional value function of $\eta$, whose minimization gives the global optimum without loss of optimality.
\begin{figure}[t]
    \centering
    \includegraphics[width=0.7\linewidth]{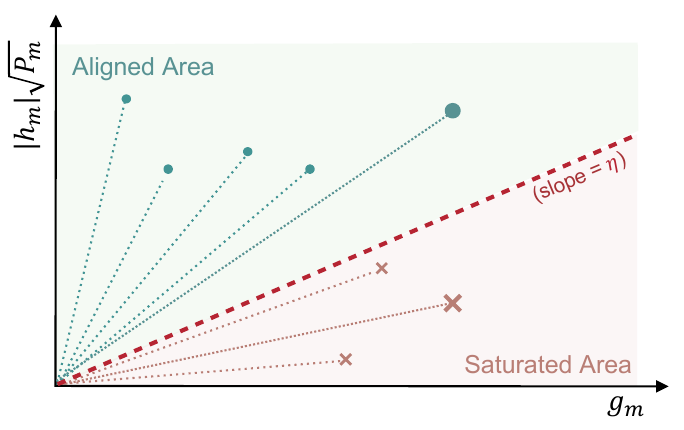}
    \caption{Geometric interpretation of the saturation threshold.}
    \label{fig:saturation_threshold_geometry}
    \vspace{-3mm}
\end{figure}

\subsection{Optimal Power-Control Policy}
\label{subsec:optimal_power_control_policy}

We now derive the optimal AirMoE power-control policy following the variable-elimination procedure described above.
For a given denoising factor $\eta>0$, the effective-noise term in \eqref{eq:pc_problem_original} is fixed, and the remaining coefficient-mismatch terms are separable across activated devices.
Therefore, the transmit-power optimization reduces to the following independent per-device subproblem:
\begin{equation}
\label{eq:pc_subproblem_pm}
\begin{aligned}
\min_{p_m} \quad & \left( \frac{|h_m|\sqrt{p_m}}{\eta} - g_m \right)^2 \\ \mathrm{s.t.} \quad & 0\le p_m\le P_m, \qquad m\in\mathcal A .
\end{aligned}
\end{equation}
The following lemma gives the closed-form solution of this subproblem.

\begin{Lemma}[Optimal transmit power for fixed $\eta$]
\label{lem:pm_opt}
For any fixed $\eta>0$, the conditional optimal transmit power of device $m\in\mathcal A$ is
\begin{equation}
\label{eq:pm_opt_eta}
\tilde{p}_m(\eta)
=
\min
\left\{
\left(
\frac{\eta g_m}{|h_m|}
\right)^2,
P_m
\right\}.
\end{equation}
\end{Lemma}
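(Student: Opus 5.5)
The plan is to reduce the per-device subproblem \eqref{eq:pc_subproblem_pm} to a one-dimensional projection problem. Fix $\eta>0$ and $m\in\mathcal A$, and assume $|h_m|>0$; the degenerate case $|h_m|=0$ gives an objective that does not depend on $p_m$, so any feasible power is optimal. Introduce the substitution $q_m\triangleq\sqrt{p_m}$. Because the square root is a strictly increasing bijection from $[0,P_m]$ onto $[0,\sqrt{P_m}]$, the subproblem is equivalent to
\begin{equation*}
\min_{0\le q_m\le \sqrt{P_m}} \left(\frac{|h_m|}{\eta}\right)^2\left(q_m-\frac{\eta g_m}{|h_m|}\right)^2 .
\end{equation*}
The leading factor is a positive constant, so the problem is to find the point of the interval $[0,\sqrt{P_m}]$ that is closest to the target $q_m^{\circ}\triangleq \eta g_m/|h_m|$.

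Next, I will use the sign of the target. Since $g_m>0$ from the softmax in \eqref{eq:gating_score}, and $\eta>0$, we have $q_m^{\circ}>0$. This means the lower bound $q_m\ge 0$ is never active. The Euclidean projection of $q_m^{\circ}$ onto the interval is therefore $\min\{q_m^{\circ},\sqrt{P_m}\}$, and there are two cases.

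\textbf{Case 1:} $q_m^{\circ}\le\sqrt{P_m}$. The objective reaches its global minimum of zero at $q_m=q_m^{\circ}$.

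\textbf{Case 2:} $q_m^{\circ}>\sqrt{P_m}$. The function $(q_m-q_m^{\circ})^2$ is strictly decreasing on $[0,\sqrt{P_m}]$, so the minimizer is the endpoint $q_m=\sqrt{P_m}$.

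Squaring the result and using monotonicity of the square on nonnegative numbers gives $\tilde p_m(\eta)=\min\{(\eta g_m/|h_m|)^2,P_m\}$, which is \eqref{eq:pm_opt_eta}.

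The only step that needs care is justifying the change of variables. In $p_m$ the objective $(|h_m|\sqrt{p_m}/\eta-g_m)^2$ is not obviously convex. I will avoid checking convexity in $p_m$ by arguing directly through the monotone bijection $p_m\mapsto\sqrt{p_m}$: it preserves the feasible set and the ordering of objective values, so minimizers map to minimizers. Uniqueness of the minimizer follows from strict convexity in $q_m$ whenever $|h_m|>0$.
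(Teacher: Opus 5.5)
Your proof is correct and follows the same approach as the paper, which notes that the mismatch vanishes at $p_m=(\eta g_m/|h_m|)^2$ and projects this point onto $[0,P_m]$. Your substitution $q_m=\sqrt{p_m}$, the use of $g_m>0$, and the separate treatment of $|h_m|=0$ make explicit the unimodality in $p_m$ that the paper's one-line projection argument takes for granted.
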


\noindent\textit{Proof:}
For fixed $\eta$, the objective in \eqref{eq:pc_subproblem_pm} is a non-negative quadratic mismatch term, whose zero-mismatch point is
$p_m=\left( \frac{\eta g_m}{|h_m|} \right)^2$.
Projecting this point onto the feasible interval $[0,P_m]$ yields \eqref{eq:pm_opt_eta}.

Lemma~\ref{lem:pm_opt} shows that, for a fixed denoising factor $\eta$, each activated device operates in one of two regimes. If the power required for exact coefficient matching does not exceed the power budget, i.e., $ \left( \frac{\eta g_m}{|h_m|} \right)^2 \le P_m$, device $m$ can align its effective coefficient with the gating score $g_m$ and eliminate the coefficient mismatch. Otherwise, device $m$ is power-limited and transmits at full power $p_m=P_m$. In this case, it remains mismatched and incurs the residual error  $\left( g_m - \frac{|h_m|\sqrt{P_m}}{\eta} \right)^2$, which is referred to as the saturated regime.

To clarify this two-regime structure, we provide a geometric interpretation in the 
$(g_m,|h_m|\sqrt{P_m})$ plane, as illustrated in Fig.~\ref{fig:saturation_threshold_geometry}.
For a fixed denoising factor $\eta$, the line
$|h_m|\sqrt{P_m}=\eta g_m$
has slope $\eta$ and separates the aligned and saturated regimes.
Each device $m$ is represented by the point $(g_m,|h_m|\sqrt{P_m})$, whose slope from the origin defines its saturation threshold $\tau_m$, given by
\begin{equation}
\label{eq:tau_def}
\tau_m
\triangleq
\frac{|h_m|\sqrt{P_m}}{g_m},
\qquad m\in\mathcal A .
\end{equation}
Accordingly, device $m$ remains aligned as long as $\eta\le\tau_m$, and becomes saturated once $\eta$ exceeds $\tau_m$.

Having obtained the optimal transmit powers for any fixed $\eta$, we now turn to the outer optimization of the denoising factor.
 By substituting the optimal transmit powers in Lemma~\ref{lem:pm_opt} into the layer-wise AirMoE error, we obtain the reduced objective as a function of $\eta$, denoted by $\mathrm{MSE}(\eta)$. The resulting one-dimensional optimization problem is given by
\begin{equation}
\label{eq:eta_problem}
\min_{\eta>0}
\quad
\mathrm{MSE}(\eta)
=
\sum_{m\in\mathcal A}
\max
\left\{
0,\,
g_m-\frac{|h_m|\sqrt{P_m}}{\eta}
\right\}^{2}
+
\frac{\sigma^2}{\eta^{2}} .
\end{equation}
The main difficulty in solving \eqref{eq:eta_problem} lies in the max operator, which makes the objective piecewise smooth with an $\eta$-dependent active set. For each device, the max term is active only when the device is saturated, i.e., when $\eta>\tau_m$. Thus, as $\eta$ varies over $(0,+\infty)$, each threshold $\tau_m$ acts as a breakpoint at which the saturation status of device $m$ changes. To remove the max operator, we sort the saturation thresholds as
\begin{equation}
\label{eq:tau_order}
\tau_{(1)}
\le
\tau_{(2)}
\le
\cdots
\le
\tau_{(|\mathcal A|)},
\end{equation}
where the parenthesized index denotes the ordering induced by the sorted thresholds. 
With $\tau_{(0)}\triangleq 0$ and $\tau_{(|\mathcal A|+1)}\triangleq+\infty$, the feasible domain of
$\eta>0$ is covered by the following regions:
\begin{equation}
\label{eq:eta_region_def}
\mathcal R_n
\triangleq
[\tau_{(n)},\,\tau_{(n+1)}],
\qquad
n=0,1,\dots,|\mathcal A|.
\end{equation}

Within the interior of $\mathcal R_n$, the saturated-device set remains unchanged. 
Specifically, for any $\eta\in\mathcal R_n$, the devices indexed by $(1),\dots,(n)$ are saturated, whereas devices $(n+1),\dots,(|\mathcal A|)$ remain aligned. Therefore, solving \eqref{eq:eta_problem} is equivalent to solving $|\mathcal A|+1$ smooth subproblems over these regions and selecting the one with the smallest objective value. Let $\mathrm{MSE}_n(\eta)$ denote the layer-wise AirMoE error for $\eta\in\mathcal R_n$.
Since only the first $n$ devices can incur coefficient mismatch, $\mathrm{MSE}_n(\eta)$ is given by
\begin{equation}
\label{eq:mse_interval}
\mathrm{MSE}_n(\eta)
=
\sum_{j=1}^{n}
\left(
g_{(j)}
-
\frac{|h_{(j)}|\sqrt{P_{(j)}}}{\eta}
\right)^2
+
\frac{\sigma^2}{\eta^2},
\qquad
\eta\in\mathcal R_n .
\end{equation}

For $n=0$, the summation in \eqref{eq:mse_interval} is empty because no device is saturated over $\mathcal R_0=(0,\tau_{(1)}]$. 
Thus, $\mathrm{MSE}_0(\eta)=\sigma^2/\eta^2$, which is strictly decreasing in $\eta$. 
Its minimum over $\mathcal R_0$ is attained at the boundary $\eta=\tau_{(1)}$, which is covered by the projection from the neighboring region. 
The remaining regions admit the closed-form minimizer in the following lemma.

\begin{Lemma}[Region-wise optimal denoising factor]
\label{lem:eta_region_opt}
For any region $\mathcal R_n$ with $1\le n\le |\mathcal A|$, $\mathrm{MSE}_n(\eta)$ has a unique stationary point over $\eta>0$, given by
\begin{equation}
\label{eq:eta_n_stationary}
\hat{\eta}_n^\star
=
\frac{
\sum_{j=1}^{n} |h_{(j)}|^2 P_{(j)}+\sigma^2
}
{
\sum_{j=1}^{n}
|h_{(j)}|\sqrt{P_{(j)}}g_{(j)}
}.
\end{equation}
The minimizer of $\mathrm{MSE}_n(\eta)$ over $\mathcal R_n$ is obtained by projecting $\hat{\eta}_n^\star$ onto $\mathcal R_n$:
\begin{equation}
\label{eq:eta_n_proj}
\eta_n^\star
=
\min
\left\{
\max
\left\{
\hat{\eta}_n^\star,\,
\tau_{(n)}
\right\},
\tau_{(n+1)}
\right\}.
\end{equation}
\end{Lemma}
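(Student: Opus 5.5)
The approach is to change variables to $w\triangleq 1/\eta$, which turns $\mathrm{MSE}_n$ into a convex quadratic. Write $\alpha_j\triangleq|h_{(j)}|\sqrt{P_{(j)}}>0$ and $g_{(j)}>0$. Then \eqref{eq:mse_interval} becomes
\begin{equation*}
f_n(w)\triangleq\sum_{j=1}^{n}\left(g_{(j)}-\alpha_j w\right)^2+\sigma^2 w^2
=\Big(\sum_{j=1}^{n}\alpha_j^2+\sigma^2\Big)w^2-2\Big(\sum_{j=1}^{n}\alpha_j g_{(j)}\Big)w+\sum_{j=1}^{n}g_{(j)}^2 .
\end{equation*}
The leading coefficient $\sum_j\alpha_j^2+\sigma^2$ is strictly positive, so $f_n$ is strictly convex in $w$. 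It has the unique stationary point
\[
\hat w_n=\frac{\sum_{j}\alpha_j g_{(j)}}{\sum_j\alpha_j^2+\sigma^2}>0 .
\]
Since $w\mapsto 1/w$ is a smooth bijection of $(0,\infty)$ with nonvanishing derivative, we have $\frac{d}{d\eta}\mathrm{MSE}_n(\eta)=-\eta^{-2}f_n'(1/\eta)$. Hence $\eta$ is a stationary point of $\mathrm{MSE}_n$ on $\eta>0$ if and only if $1/\eta$ is a stationary point of $f_n$. This yields the unique stationary point $\hat\eta_n^\star=1/\hat w_n$, which is exactly \eqref{eq:eta_n_stationary}.

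For the projection claim, I will not appeal to convexity in $\eta$, since $\mathrm{MSE}_n$ need not be convex in $\eta$. Instead I will use unimodality. Strict convexity of $f_n$ gives $f_n'(w)<0$ for $w<\hat w_n$ and $f_n'(w)>0$ for $w>\hat w_n$. Through the derivative relation above, this means $\mathrm{MSE}_n$ is strictly decreasing on $(0,\hat\eta_n^\star]$ and strictly increasing on $[\hat\eta_n^\star,\infty)$.

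Restricting a strictly unimodal function to the interval $\mathcal R_n=[\tau_{(n)},\tau_{(n+1)}]$ then has three cases. If $\hat\eta_n^\star$ lies in the interval, it is the minimizer. If $\hat\eta_n^\star<\tau_{(n)}$, the function is increasing on the interval, so the minimizer is $\tau_{(n)}$. If $\hat\eta_n^\star>\tau_{(n+1)}$, the function is decreasing on the interval, so the minimizer is $\tau_{(n+1)}$. These three cases are exactly the clamp in \eqref{eq:eta_n_proj}. For $n=|\mathcal A|$ the upper endpoint is $+\infty$, and the clamp reduces to $\max\{\hat\eta_n^\star,\tau_{(n)}\}$, which is still attained because $\hat\eta_n^\star$ is finite.

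The only delicate point is this last step: the projection rule needs the monotonicity/unimodality argument, not convexity in $\eta$. A minor remaining check is that ties among the thresholds produce degenerate intervals, on which the clamp still returns the single feasible point.
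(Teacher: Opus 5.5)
Your proof is correct and follows essentially the same route as the paper. The paper differentiates $\mathrm{MSE}_n(\eta)$ directly, obtaining $\frac{2}{\eta^3}\bigl(\eta\sum_{j=1}^{n}|h_{(j)}|\sqrt{P_{(j)}}g_{(j)}-\sum_{j=1}^{n}|h_{(j)}|^2P_{(j)}-\sigma^2\bigr)$, whose bracket is affine and increasing in $\eta$ and equals $-\tfrac{\eta}{2}f_n'(1/\eta)$ in your substitution $w=1/\eta$; like you, it then uses the resulting decreasing-then-increasing shape (not convexity in $\eta$) to justify the projection onto $\mathcal R_n$.
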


\noindent\textit{Proof:} See Appendix~\ref{subsec:eta_region_opt}.

Given the region-wise minimizers in Lemma~\ref{lem:eta_region_opt}, the globally optimal denoising factor is obtained by selecting the candidate
that yields the minimum layer-wise AirMoE error. Specifically, the globally optimal region index is given by
\begin{equation}
\label{eq:n_global_def}
n^\star
\in
\arg\min_{1\le n\le |\mathcal A|}
\mathrm{MSE}(\eta_n^\star).
\end{equation}
The complete globally optimal AirMoE power-control policy is characterized
in the following theorem.

\begin{Theorem}[Optimal AirMoE power control]
\label{thm:global_policy_fixedA}
For the AirMoE power-control problem in \eqref{eq:pc_problem_original}, the globally optimal denoising factor  $\eta^\star$ is given by
\begin{equation}
\label{eq:eta_global_def}
\eta^\star=\eta_{n^\star}^\star,
\end{equation}
where $n^\star$ is defined in \eqref{eq:n_global_def}.
Given $\eta^\star$, the globally optimal transmit power for device $m\in\mathcal A$ is given by
\begin{equation}
\label{eq:global_policy_compact}
p_m^\star
=
\min
\left\{
\left(
\frac{\eta^{\star} g_m}{|h_m|}
\right)^2,
P_m
\right\},
\qquad
m\in\mathcal A .
\end{equation}
\end{Theorem}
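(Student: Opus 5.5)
The proof follows the variable-elimination route already set up in this section. First, the two-stage minimization is equivalent to the joint one. For any feasible pair $(\{p_m\},\eta)$ we have
\[
\mathrm{obj}(\{p_m\},\eta)\ge \mathrm{obj}(\{\tilde p_m(\eta)\},\eta)=\mathrm{MSE}(\eta),
\]
by Lemma~\ref{lem:pm_opt}. Here $\mathrm{MSE}(\eta)$ is the reduced objective in \eqref{eq:eta_problem}: substituting $\tilde p_m(\eta)$ gives mismatch $0$ when $\eta\le\tau_m$ and $(g_m-|h_m|\sqrt{P_m}/\eta)^2$ otherwise, which is exactly the max form. Hence
\[
\inf_{\{p_m\},\eta}\mathrm{obj}=\inf_{\eta>0}\mathrm{MSE}(\eta).
\]
Any minimizer $\eta^\star$ of $\mathrm{MSE}$ paired with $\tilde p_m(\eta^\star)$ is therefore a global minimizer of \eqref{eq:pc_problem_original}. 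This yields \eqref{eq:global_policy_compact} once we show that $\eta^\star=\eta^\star_{n^\star}$ minimizes $\mathrm{MSE}$ over $(0,\infty)$.

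Second, I would handle the one-dimensional problem by the region decomposition \eqref{eq:eta_region_def}. The regions $\mathcal R_0,\dots,\mathcal R_{|\mathcal A|}$ cover $(0,\infty)$. On each $\mathcal R_n$, $\mathrm{MSE}(\eta)=\mathrm{MSE}_n(\eta)$ from \eqref{eq:mse_interval}, and the two expressions agree at the breakpoints by continuity, since each max term vanishes at $\eta=\tau_m$. Thus
\[
\inf_{\eta>0}\mathrm{MSE}=\min_{n}\inf_{\mathcal R_n}\mathrm{MSE}_n .
\]
For $n=0$, $\mathrm{MSE}_0$ is strictly decreasing, so its infimum over $\mathcal R_0$ is $\mathrm{MSE}(\tau_{(1)})\ge \mathrm{MSE}(\eta_1^\star)$, because $\tau_{(1)}\in\mathcal R_1$ and $\eta_1^\star$ minimizes over $\mathcal R_1$. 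So $\mathcal R_0$ can be discarded. For $1\le n\le|\mathcal A|$, Lemma~\ref{lem:eta_region_opt} gives the region minimizer $\eta_n^\star$ with value $\mathrm{MSE}(\eta_n^\star)$. Taking the best $n$ as in \eqref{eq:n_global_def} then gives the global minimum.

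The main obstacle is existence of a minimizer on the unbounded last region $\mathcal R_{|\mathcal A|}=[\tau_{(|\mathcal A|)},\infty)$, together with the validity of the projection there. In terms of $x=1/\eta$, $\mathrm{MSE}_n$ is a convex quadratic in $x$ with positive leading coefficient $\sum_{j\le n}|h_{(j)}|^2P_{(j)}+\sigma^2$. Its unique stationary point is $x=1/\hat\eta_n^\star$, which is positive. So $\mathrm{MSE}_n$ is unimodal in $\eta$: decreasing before $\hat\eta_n^\star$ and increasing after. The projection \eqref{eq:eta_n_proj} is therefore the minimizer over any interval, including the unbounded one, where the value tends to $\sum_{j\le n}g_{(j)}^2$ as $\eta\to\infty$ and is not smaller than the value at $\hat\eta_n^\star$. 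This unimodality is essentially the content of Lemma~\ref{lem:eta_region_opt}, which I will cite rather than re-derive. I would also note that ties in $\tau$ or degenerate regions are harmless, since the arg-min in \eqref{eq:n_global_def} may select any minimizer.
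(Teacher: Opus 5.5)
Your proposal is correct and follows the paper's own route: eliminate the powers via Lemma~\ref{lem:pm_opt}, reduce to the piecewise one-dimensional problem in $\eta$, split $(0,\infty)$ at the sorted thresholds (discarding $\mathcal R_0$ through its boundary point $\tau_{(1)}$), and select the best of the projected stationary points from Lemma~\ref{lem:eta_region_opt}. Your extra checks make explicit what the paper leaves implicit: the inequality justifying the two-stage reduction, agreement of the pieces at the breakpoints, and existence of the minimizer on the unbounded last region via the convex quadratic in $1/\eta$.
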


The threshold structure above further provides guidance for the subsequent expert-placement design.

\subsection{Asymptotic Analysis}
\label{subsec:asymptotic_analysis_extreme_snr}

We further examine the optimal AirMoE power-control policy in Theorem~\ref{thm:global_policy_fixedA} under two extreme SNR regimes.

In the high-SNR regime, i.e., $\sigma^2\to 0$, the layer-wise AirMoE error in \eqref{eq:eta_problem} is dominated by the coefficient-mismatch term.
The optimal policy therefore enforces exact coefficient alignment for all activated devices, which requires
$\eta
\le
\tau_{(1)}
=
\min_{m\in\mathcal A}
\frac{|h_m|\sqrt{P_m}}{g_m}$.
Among all zero-mismatch choices in $(0,\tau_{(1)}]$, $\eta=\tau_{(1)}$ minimizes the denoised noise term.
Hence, in the high-SNR limit, the optimal denoising factor is 
\begin{equation}
\label{eq:eta_high_snr_star}
\eta^\star=\tau_{(1)}.
\end{equation}
The corresponding optimal transmit power is
\begin{equation}
\label{eq:p_high_snr}
p_m^\star
=
\left(\frac{\tau_{(1)} g_m}{|h_m|}\right)^2,
\quad
m\in\mathcal A,
\end{equation}
which realizes channel inversion to match the desired gating scores.

In the low-SNR regime, i.e., $\sigma^2\to\infty$, the denoised receiver-noise term dominates \eqref{eq:eta_problem}.
The optimal policy therefore selects a large denoising factor, placing the solution in the fully saturated region where all activated devices transmit at full power:
\begin{equation}
\label{eq:p_low_snr}
p_m^\star
=P_m,
\quad
m\in\mathcal A.
\end{equation}
The corresponding denoising factor is the stationary point of the last region:
\begin{equation}
\label{eq:eta_low_snr}
\eta^\star
=
\hat{\eta}_{|\mathcal A|}^\star
=
\frac{
\sum_{m\in\mathcal A}
|h_m|^2P_m
+\sigma^2
}
{
\sum_{m\in\mathcal A}
|h_m|\sqrt{P_m}g_m
}.
\end{equation}

Therefore, the closed-form AirMoE policy continuously connects channel-inversion-based coefficient matching at high SNR and full-power transmission at low SNR, as further validated by simulations.

\section{Expert Placement for AirMoE}
\label{sec:joint_expert_placement_power_control}

In this section, we study the long-timescale expert-placement problem for AirMoE.
Building on the optimized per-aggregation power control derived in the preceding section, we develop an activation- and channel-aware placement rule and then discuss its extension to the multi-expert-per-device setting.

\subsection{Activation- and Channel-Aware Expert Placement}
\label{subsec:rank_preserving_placement_rule}

The preceding section solves the short-timescale power-control problem for each instantaneous AirMoE aggregation. We now return to the expert-indexed placement notation and study the long-timescale expert-placement problem in \eqref{eq:min_placement}. For a given placement $\mathbf X$, applying the optimal power-control policy in Theorem~\ref{thm:global_policy_fixedA} attains the optimal per-aggregation distortion $\mathrm{MSE}_{\ell}^{(t),\star}(\mathbf X)$, which constitutes the objective term in \eqref{eq:min_placement}. However, due to its dependence on routing realizations, fading states, and saturation regimes, $\mathrm{MSE}_{\ell}^{(t),\star}(\mathbf X)$ does not admit a tractable closed-form expression with respect to $\mathbf X$. To obtain an interpretable and tractable placement rule, we upper bound the layer-weighted long-term AirMoE error by evaluating the aggregation error under a feasible denoising factor, as provided in Lemma~\ref{lem:placement_layer_weighted_bound}.

\begin{Lemma}[AirMoE error upper bound]
\label{lem:placement_layer_weighted_bound}
For any placement $\mathbf X$, the layer-weighted long-term distortion under optimal  power control satisfies
\begin{equation}
\label{eq:placement_factorized_bound_raw}
\begin{aligned}
&
\sum_{\ell=1}^{L}
a_\ell
\mathbb E
\left[
\mathrm{MSE}_{\ell}^{(t),\star}(\mathbf X)
\right]
\le
\sigma^2
\sum_{\ell=1}^{L}
a_\ell c_\ell^2\\
&\qquad \qquad\times
\mathbb E
\left[
\sum_{i\in S_\ell^{(t)}}
\sum_{m\in\mathcal M}
x_{\ell,i,m}
\frac{
\left(g_{\ell,i}^{(t)}\right)^2
}
{
|h_m^{(t)}|^2P_m
}
\right].
\end{aligned}
\end{equation}
\end{Lemma}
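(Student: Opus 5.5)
Since the optimal power control of Theorem~\ref{thm:global_policy_fixedA} minimizes the layer-wise AirMoE error over all feasible pairs $(\mathbf p_\ell^{(t)},\eta_\ell^{(t)})$, the optimal value is at most the error of any particular feasible choice. The plan is to choose the \emph{fully aligned} (zero-mismatch) policy suggested by the high-SNR analysis, namely $\eta=\tau_{(1)}=\min_{i\in S_\ell^{(t)}} H_{\ell,i}^{(t)}(\mathbf X)\sqrt{P_{\ell,i}(\mathbf X)}/g_{\ell,i}^{(t)}$ together with $p_{\ell,i}=(\eta g_{\ell,i}^{(t)}/H_{\ell,i}^{(t)})^2$. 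This choice is feasible, since each required power is at most $P_{\ell,i}(\mathbf X)$ by the definition of $\tau_{(1)}$ as the smallest threshold. With it the coefficient-mismatch term in \eqref{eq:mse_under_placement} vanishes, leaving
\begin{equation*}
\mathrm{MSE}_\ell^{(t),\star}(\mathbf X)\le c_\ell^2\frac{\sigma^2}{\tau_{(1)}^2}
= c_\ell^2\sigma^2\max_{i\in S_\ell^{(t)}}\frac{(g_{\ell,i}^{(t)})^2}{(H_{\ell,i}^{(t)})^2P_{\ell,i}(\mathbf X)}.
\end{equation*}

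The second step is to replace the maximum over the activated set by the sum over it. All terms are nonnegative, so the maximum is bounded by the sum. Rewriting $(H_{\ell,i}^{(t)})^2 P_{\ell,i}(\mathbf X)$ via \eqref{eq:effective_channel_under_placement}--\eqref{eq:effective_power_budget_under_placement} requires care: a reciprocal of a sum of $x$-weighted terms does not split directly. However, because $x_{\ell,i,m}$ is binary and exactly one $m$ has $x_{\ell,i,m}=1$ by \eqref{eq:placement_constraint_expert}, we have $1/\bigl((H_{\ell,i}^{(t)})^2P_{\ell,i}\bigr)=\sum_m x_{\ell,i,m}/(|h_m^{(t)}|^2P_m)$. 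This identity turns the per-aggregation bound into the summand inside the expectation of \eqref{eq:placement_factorized_bound_raw}.

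The final step is to take expectations over tokens, routing and fading, multiply by $a_\ell\ge0$, and sum over $\ell$. Monotonicity and linearity of expectation then give \eqref{eq:placement_factorized_bound_raw}.

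The main obstacle, though a modest one, is the max-to-sum relaxation and the placement identity. Two further points need a check. First, the expectation must be finite; if it is not, the bound holds trivially. Second, the channel magnitudes $|h_m^{(t)}|$ must be positive almost surely, so that $\tau_{(1)}>0$ and the chosen $\eta$ is admissible.
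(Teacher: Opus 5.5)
Your proposal is correct and follows essentially the same route as the paper's proof. The paper also evaluates the error at the feasible fully aligned denoising factor $\eta' = \min_{i\in S_\ell^{(t)}} \sum_m x_{\ell,i,m}|h_m^{(t)}|\sqrt{P_m}/g_{\ell,i}^{(t)}$ (your $\tau_{(1)}$), rewrites $c_\ell^2\sigma^2/\eta'^2$ as a maximum expressed through the placement variables via the single-assignment constraint, bounds the maximum by the sum, and then takes expectations and sums with weights $a_\ell$; your remarks on positive channel gains and finite expectations address technicalities the paper leaves implicit.
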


\noindent\textit{Proof:} See Appendix~\ref{subsec:placement_layer_weighted_bound_proof}.

Lemma~\ref{lem:placement_layer_weighted_bound} provides a tractable upper bound on the placement objective in \eqref{eq:min_placement}. To reveal its separable structure, we first rewrite the summation over the activated expert set using an activation indicator. By the linearity of expectation and the fact that the placement variable $\mathbf X$ is fixed before inference, the right-hand side of \eqref{eq:placement_factorized_bound_raw} can be rewritten as
\begin{equation}
\label{eq:placement_upper_bound_factorized}
\begin{aligned}
&
\sigma^2
\sum_{\ell=1}^{L}
a_\ell c_\ell^2
\mathbb E
\left[
\sum_{i\in S_\ell^{(t)}}
\sum_{m\in\mathcal M}
x_{\ell,i,m}
\frac{
\left(g_{\ell,i}^{(t)}\right)^2
}
{
|h_m^{(t)}|^2P_m
}
\right]
\\
&=
\sigma^2
\sum_{\ell=1}^{L}
\sum_{i\in\mathcal I}
\sum_{m\in\mathcal M}
x_{\ell,i,m}
a_\ell c_\ell^2
\mathbb E
\left[
\mathbf 1\{i\in S_\ell^{(t)}\}
\frac{
\left(g_{\ell,i}^{(t)}\right)^2
}
{
|h_m^{(t)}|^2P_m
}
\right]
\\
&=
\sigma^2
\sum_{\ell=1}^{L}
\sum_{i\in\mathcal I}
\sum_{m\in\mathcal M}
x_{\ell,i,m}\\
&\qquad \qquad \times
\underbrace{
a_\ell c_\ell^2
\mathbb E
\left[
\mathbf 1\{i\in S_\ell^{(t)}\}
\left(g_{\ell,i}^{(t)}\right)^2
\right]
}_{W_{\ell,i}}
\underbrace{
\mathbb E
\left[
\frac{1}{|h_m^{(t)}|^2P_m}
\right]
}_{R_m}.
\end{aligned}
\end{equation}
where the last equality follows because the router-induced activation and gating scores are independent of wireless fading. The term $W_{\ell,i}$ represents the expert-side importance, while $R_m$ represents the device-side cost. Specifically, the expert-side importance is given by
\begin{equation}
\begin{aligned}
\label{eq:placement_expert_weight}
W_{\ell,i}
&\triangleq
a_\ell c_\ell^2
\mathbb E
\left[
\mathbf 1\{i\in S_\ell^{(t)}\}
\left(g_{\ell,i}^{(t)}\right)^2
\right]\\
&=
a_\ell c_\ell^2
\Pr(i\in S_\ell^{(t)})
\mathbb E
\left[
\left(g_{\ell,i}^{(t)}\right)^2
\,\middle|\,
i\in S_\ell^{(t)}
\right].
\end{aligned}
\end{equation}
Here, $a_\ell$ accounts for the layer sensitivity and $c_\ell^2$ captures the layer-wise expert-output scale.
The activation probability $\Pr(i\in S_\ell^{(t)})$ reflects how frequently expert $(\ell,i)$ is selected by the router, while the conditional expectation reflects how strongly it contributes once selected.
Thus, a larger $W_{\ell,i}$ indicates that expert $(\ell,i)$ is more critical to AirMoE inference reliability, in the sense that its aggregation distortion has a larger long-term impact on the final inference performance.

The corresponding device-side cost in
\eqref{eq:placement_upper_bound_factorized} is given by
\begin{equation}
\label{eq:placement_device_cost_exact}
R_m
\triangleq
\mathbb E
\left[
\frac{1}{|h_m^{(t)}|^2P_m}
\right].
\end{equation}
This term characterizes the long-term inverse channel--power capability of device $m$, where a smaller $R_m$ indicates a stronger device for AirMoE uplink aggregation. 

Since expert placement is configured at a slow timescale, both $W_{\ell,i}$ and $R_m$ can be estimated before online inference from long-term statistics. The expert-side importance $W_{\ell,i}$ can be obtained through offline model profiling over representative samples, while the device-side cost $R_m$ can be estimated from long-term channel and power statistics. To avoid numerical instability caused by rare deep-fading realizations, channel samples with power below a prescribed threshold are excluded from the estimation.

Using the expert-side importance in \eqref{eq:placement_expert_weight} and the device-side cost in \eqref{eq:placement_device_cost_exact}, the long-term placement design in \eqref{eq:joint_placement_power_problem} reduces to the following separable assignment problem:

\begin{equation}
\label{eq:placement_separable_problem}
\begin{aligned}
\min_{\mathbf X}
\quad
&
\sum_{\ell=1}^{L}
\sum_{i\in\mathcal I}
\sum_{m\in\mathcal M}
x_{\ell,i,m}
W_{\ell,i}
 R_m
\\
\mathrm{s.t.}
\quad
&\eqref{eq:placement_constraint_binary},\eqref{eq:placement_constraint_expert},\eqref{eq:placement_constraint_device}.
\end{aligned}
\end{equation}
This multiplicatively separable structure leads to a rank-preserving placement rule, as established in Theorem~\ref{thm:placement_rank_preserving}.

\begin{Theorem}[Rank-preserving optimal placement]
\label{thm:placement_rank_preserving}
Let $\mathcal J=\mathcal L\times\mathcal I$ be the set of expert indices with $|\mathcal J|=LI$, and let $\mathcal M$ be the set of edge devices with $|\mathcal M|\ge |\mathcal J|$.
For each $j=(\ell,i)\in\mathcal J$, define $W_j=W_{\ell,i}$.
Relabel the experts and devices such that
\[
W_1\ge W_2\ge \cdots \ge W_{|\mathcal J|},
\qquad
 R_1\le  R_2\le \cdots \le  R_{|\mathcal M|}.
\]
Then an optimal solution to \eqref{eq:placement_separable_problem} assigns the $j$-th largest-weight expert to the $j$-th lowest-cost device, i.e.,
\begin{equation}
\label{eq:rank_preserving_assignment}
x_{j,m}^{\star}
=
\begin{cases}
1, & m=j,\quad j=1,\dots,|\mathcal J|,\\
0, & \text{otherwise},
\end{cases}
\end{equation}
where only the first $|\mathcal J|$ devices are used.
\end{Theorem}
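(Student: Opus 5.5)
The plan is a rearrangement-inequality argument, organized as an exchange argument on feasible assignments. Under \eqref{eq:placement_constraint_binary}, \eqref{eq:placement_constraint_expert} and \eqref{eq:placement_constraint_device}, any feasible $\mathbf X$ is an injective map $\pi:\mathcal J\to\mathcal M$, where $\pi(j)$ is the unique device with $x_{j,\pi(j)}=1$. The objective of \eqref{eq:placement_separable_problem} then reads $F(\pi)=\sum_{j\in\mathcal J}W_jR_{\pi(j)}$, with all $W_j\ge 0$ and $R_m\ge 0$ by their definitions in \eqref{eq:placement_expert_weight} and \eqref{eq:placement_device_cost_exact}. The set of injections is finite, so an optimal $\pi^\circ$ exists. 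The goal is to show that the identity map $\pi^\star(j)=j$ from \eqref{eq:rank_preserving_assignment} satisfies $F(\pi^\star)\le F(\pi^\circ)$.

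The first step reduces to the first $|\mathcal J|$ devices. Suppose $\pi$ uses a device $m>|\mathcal J|$. Then, by counting, some device $m'\le|\mathcal J|$ is unused. Reassigning that expert from $m$ to $m'$ changes the cost by $W_j(R_{m'}-R_m)\le 0$, because $R_{m'}\le R_m$ and $W_j\ge 0$. Repeating this yields an optimal map whose image is exactly $\{1,\dots,|\mathcal J|\}$, i.e., a permutation $\pi$ of $\{1,\dots,|\mathcal J|\}$.

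The second step is the pairwise exchange. If $\pi$ is not the identity, there exist $j<k$ with $\pi(j)>\pi(k)$. Swapping the two devices changes the cost by
\[
W_jR_{\pi(k)}+W_kR_{\pi(j)}-W_jR_{\pi(j)}-W_kR_{\pi(k)}=-(W_j-W_k)\bigl(R_{\pi(j)}-R_{\pi(k)}\bigr)\le 0,
\]
since $W_j\ge W_k$ and $R_{\pi(j)}\ge R_{\pi(k)}$. Each swap strictly reduces the number of inversions of $\pi$ while never increasing $F$. After finitely many swaps the map becomes the identity, which gives $F(\pi^\star)\le F(\pi^\circ)$, so $\pi^\star$ is optimal.

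There is no serious obstacle. The only care points are (i) allowing ties in the orderings of $W$ and $R$, which is why the claim is only that the rank-preserving assignment is \emph{an} optimal solution, and (ii) confirming that $W_j, R_m\ge 0$, since the device-truncation step uses this.
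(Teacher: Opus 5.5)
Your proposal is correct and follows the paper's proof essentially step for step. Both first shift any expert off a device outside the $|\mathcal J|$ lowest-cost ones (using $W_j\ge 0$), then remove inversions by pairwise swaps with change $-(W_j-W_k)(R_{\pi(j)}-R_{\pi(k)})\le 0$. Your inversion-count argument for termination makes precise what the paper leaves as ``repeatedly applying this exchange.''
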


\noindent\textit{Proof:} See Appendix~\ref{subsec:rank_preserving_placement_proof}.

Theorem~\ref{thm:placement_rank_preserving} shows that the separable assignment problem admits a rank-preserving solution: experts with larger layer sensitivity, aggregation-output scale, activation frequency, or gating-score magnitude should be assigned to devices with stronger long-term channel--power capability. This result provides an inference-aware placement principle: wireless resources should be prioritized for experts whose aggregation errors have larger long-term impact on inference performance.

\subsection{Extension to Multi-Expert Devices}
\label{subsec:extension_multi_expert_devices}

The rank-preserving rule in Theorem~\ref{thm:placement_rank_preserving} is derived under the one-expert-per-device architecture. In practical deployments, however, a device may have sufficient storage to host multiple experts. In this case, multi-expert hosting can be allowed across different MoE layers, while experts within the same layer should still be placed on distinct devices to preserve simultaneous AirMoE uplink aggregation. 

Let $C_m$ denote the storage capacity of device $m$, measured by the maximum number of expert instances it can host. The one-expert-per-device constraint can be relaxed to
\begin{equation}
\label{eq:many_to_one_capacity_constraint}
\sum_{\ell\in\mathcal L}
\sum_{i\in\mathcal I}
x_{\ell,i,m}
\le
C_m,
\qquad
\forall m\in\mathcal M,
\end{equation}
while imposing the layer-wise non-overlapping constraint
\begin{equation}
\label{eq:many_to_one_layer_constraint}
\sum_{i\in\mathcal I}
x_{\ell,i,m}
\le
1,
\qquad
\forall \ell\in\mathcal L,\ m\in\mathcal M .
\end{equation}
The expert-assignment constraint in \eqref{eq:placement_constraint_expert} remains unchanged.

With \eqref{eq:many_to_one_capacity_constraint} and \eqref{eq:many_to_one_layer_constraint}, the placement problem becomes a capacitated binary assignment problem with layer-wise conflict constraints. The closed-form rank-preserving solution in Theorem~\ref{thm:placement_rank_preserving} no longer directly applies, since the allocation of a device to one expert may restrict the placement choices of other experts in the same layer. Nevertheless, the theorem still provides a useful ordering principle: experts with larger importance weights $W_{\ell,i}$ should be preferentially assigned to devices with smaller channel--power costs $R_m$, subject to the storage-capacity and layer-wise non-overlapping constraints. Since expert placement is configured offline, the resulting problem can be solved using standard integer linear programming techniques~\cite{wolsey1998integer}.

\section{Experimental Results}
\label{sec:experimental_results}
\begin{figure*}[t]
    \centering
    \makebox[0.5\textwidth][c]{%
        \subfloat[Power-control ablation\label{fig:arceasy_powercontrol_acc}]{
            \includegraphics[width=0.43\textwidth]{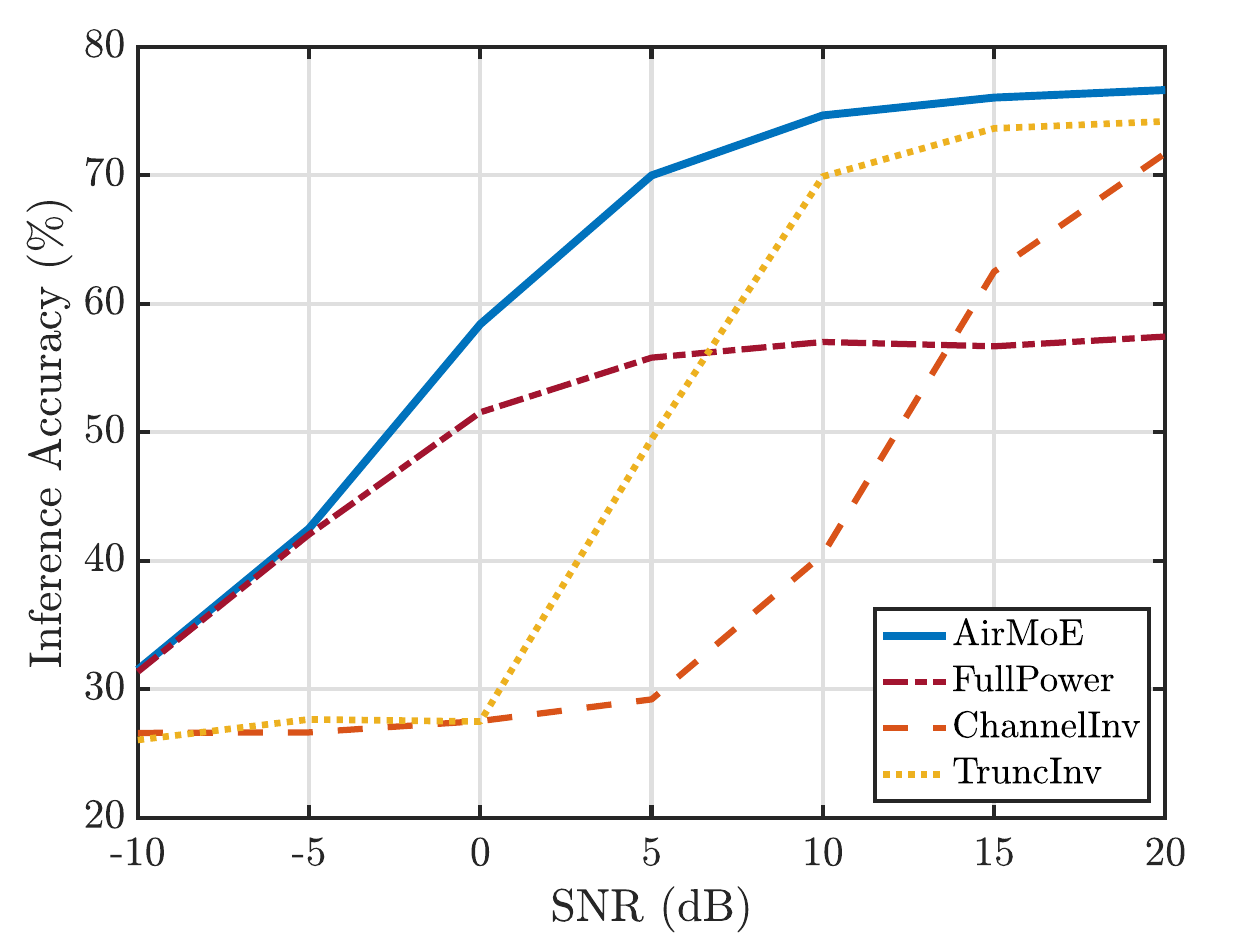}
        }
    }%
    \makebox[0.5\textwidth][c]{%
        \subfloat[Expert-placement ablation\label{fig:arceasy_placement_acc}]{
            \includegraphics[width=0.43\textwidth]{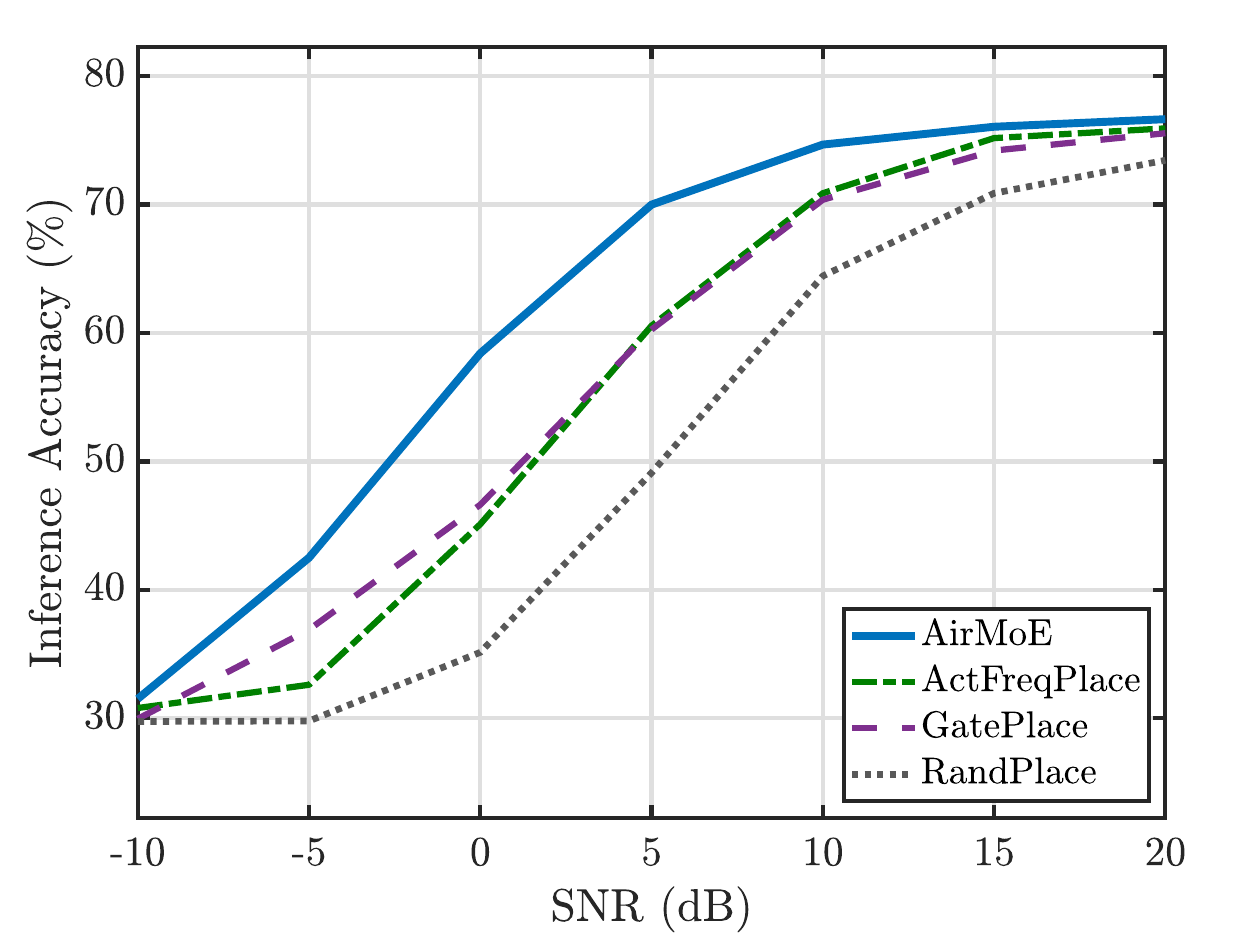}
        }
    }
    \caption{Inference accuracy of AirMoE and its power-control and expert-placement ablations under different SNRs.}
    \label{fig:snr_performance}
    \vspace{-3mm}
\end{figure*}

\subsection{Experimental Setup}
\label{subsec:experimental_setup}

\subsubsection{MoE Configuration}
We evaluate the proposed AirMoE framework using the OLMoE-1B-7B-0924 model~\cite{muennighoff2025olmoe}. The model contains $L=16$ MoE layers, each with $64$ experts, and adopts Top-$K$ routing with $K=8$ activated experts per token. Experiments are conducted on ARC-Easy~\cite{clark2018arc}, a standard English multiple-choice question-answering benchmark, using the lm-evaluation-harness toolkit~\cite{lmevalharness}.
The offline statistics used by AirMoE are estimated from a small calibration set before inference. These include the normalization statistics $\{\boldsymbol{\mu}_{\ell},c_\ell\}$, the layer-sensitivity coefficients $\{a_\ell\}$ calibrated using a target relative accuracy drop of $\rho=3\%$, and the expert-side weights $\{W_{\ell,i}\}$ computed according to \eqref{eq:placement_expert_weight}.

\subsubsection{Wireless Setup}
For the uplink channels, the instantaneous channel coefficient of device $m$ is modeled as
$h_m^{(t)}\sim\mathcal{CN}(0,\Omega_m)$, where $\Omega_m$ denotes the long-term average channel power. Following the standard large-scale fading model~\cite{goldsmith2005wireless}, we generate $\Omega_m \propto d_m^{-2.8}10^{X_m/10}$,
where $d_m$ is the distance between device $m$ and the edge server, uniformly sampled from $[30,120]$ m, and $X_m\sim\mathcal{N}(0,\sigma_{\rm sh}^{2})$ denotes log-normal shadowing in dB. The shadowing standard deviation $\sigma_{\rm sh}$ controls the degree of long-term channel heterogeneity across devices and is set to $4~\mathrm{dB}$ unless otherwise specified.
The network-average receive SNR is defined as
\begin{equation}
\label{eq:snr_def}
\mathrm{SNR}
\triangleq
10\log_{10}
\left(
\frac{1}{M}
\sum_{m=1}^{M}
\frac{P_m\Omega_m}{\sigma^2}
\right)
~\mathrm{dB},
\end{equation}
where $P_m$ is the transmit-power budget of device $m$, and $\sigma^2$ is the effective real-domain receiver-noise variance. We set $P_m=0.2~\mathrm{W}$ for all devices to isolate long-term channel heterogeneity, and vary $\sigma^2$ to obtain different SNR values. The device-side costs $\{R_m\}$ used for expert placement are estimated from the corresponding long-term channel statistics.

\subsubsection{Benchmarking Schemes}
We compare the proposed AirMoE scheme, which integrates  optimal power control and layer-aware expert placement, with two groups of ablation baselines to separately assess the contributions of power control and expert placement. All baselines are implemented under the same over-the-air MoE aggregation framework and differ only in the specified design component.

The power-control baselines keep the proposed layer-aware placement fixed and differ only in the instantaneous power-control policy.

\begin{itemize}
    \item \emph{Full-Power Transmission} (FullPower): 
    All activated devices transmit at their maximum power budgets, and the server-side denoising factor is optimized for the resulting over-the-air aggregation.

\item \emph{Channel Inversion} (ChannelInv): 
The denoising factor is determined by the minimum saturation threshold among activated devices,
$\eta_{\ell}^{(t)}
    =
    \min_{m\in\mathcal A_\ell^{(t)}}
    \frac{|h_m^{(t)}|\sqrt{P_m}}{g_{\ell,m}^{(t)}}$,
which ensures exact coefficient alignment for all activated experts.

    \item \emph{Truncated Channel Inversion} (TruncInv):
    Activated devices with weak instantaneous channels, i.e., $|h_m^{(t)}|<\xi$, are excluded from transmission.
    Channel-inversion-based aggregation is then performed over the remaining activated devices. Unless otherwise specified, we set $\xi=0.2$.

\end{itemize}

The expert placement baselines use the same optimal power-control policy as AirMoE, while adopting different expert-ranking criteria for assignment.

\begin{itemize}
\item \emph{Activation-Frequency-Based Placement} (ActFreqPlace): 
    Experts are ranked only by the activation-frequency component in \eqref{eq:placement_expert_weight}, i.e., $\Pr(i\in S_\ell^{(t)})$.
    
    \item \emph{Gating-Based Placement} (GatePlace):
    Experts are ranked only by the activation-conditioned gating component in \eqref{eq:placement_expert_weight}, i.e., $\mathbb E
\left[
\left(g_{\ell,i}^{(t)}\right)^2
\,\middle|\,
i\in S_\ell^{(t)}
\right]$.
    
    \item \emph{Random Placement} (RandPlace): 
    Experts are randomly assigned to devices under the one-expert-per-device constraint.

    \item \emph{Layer-Unaware Placement} (LayerUnaware): Experts are ranked according to the importance weight in \eqref{eq:placement_expert_weight} with $a_\ell=1$ for all MoE layers. This baseline is used to isolate the effect of calibrated layer-sensitivity weighting.
    
\end{itemize}
\begin{figure*}[t]
    \centering
    \subfloat[Inference accuracy\label{fig:layer_weight_acc}]{
        \includegraphics[width=0.30\textwidth]{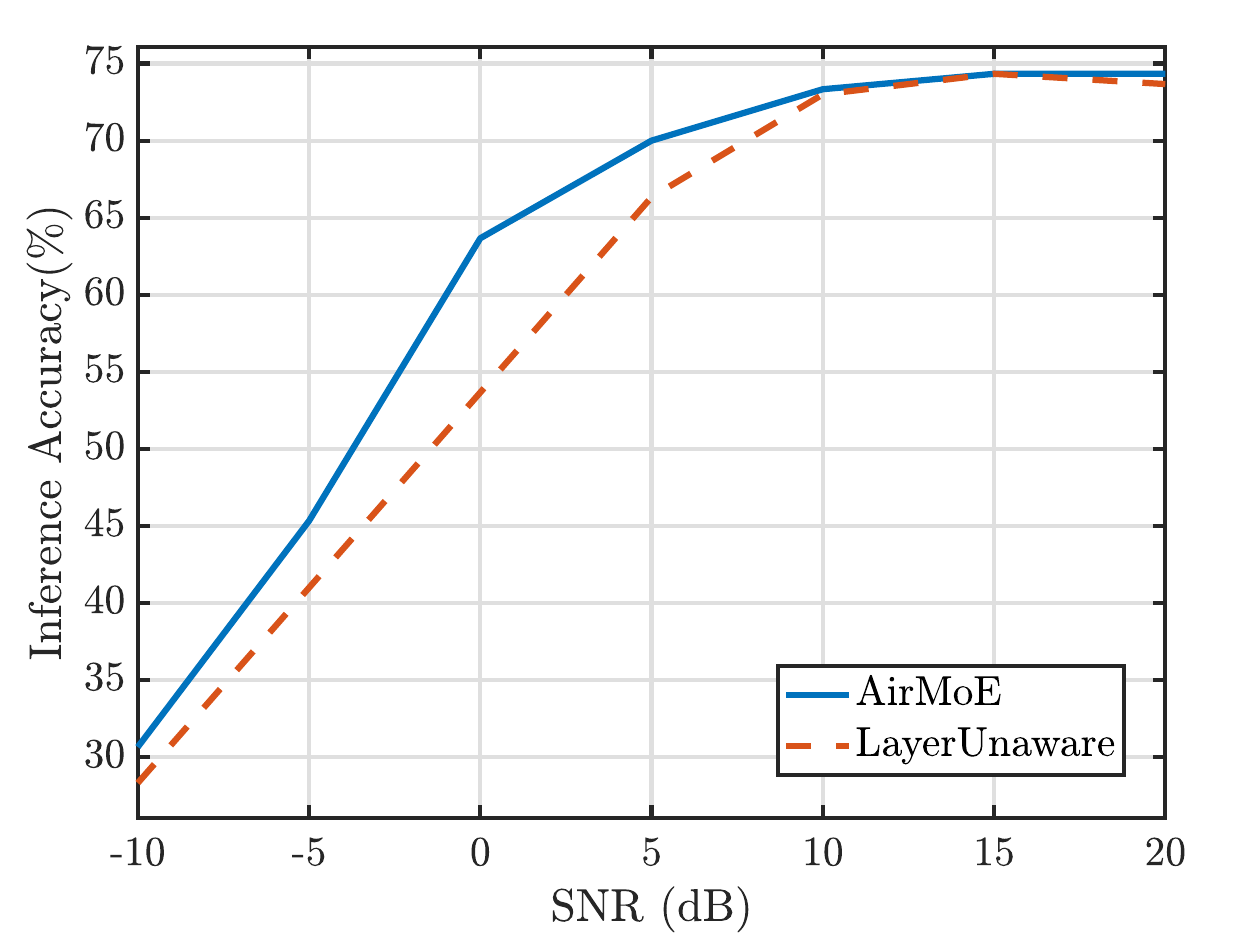}
    }
    \hfill
    \subfloat[Average layer-wise AirMoE error\label{fig:layer_weight_avg_mse}]{
        \includegraphics[width=0.30\textwidth]{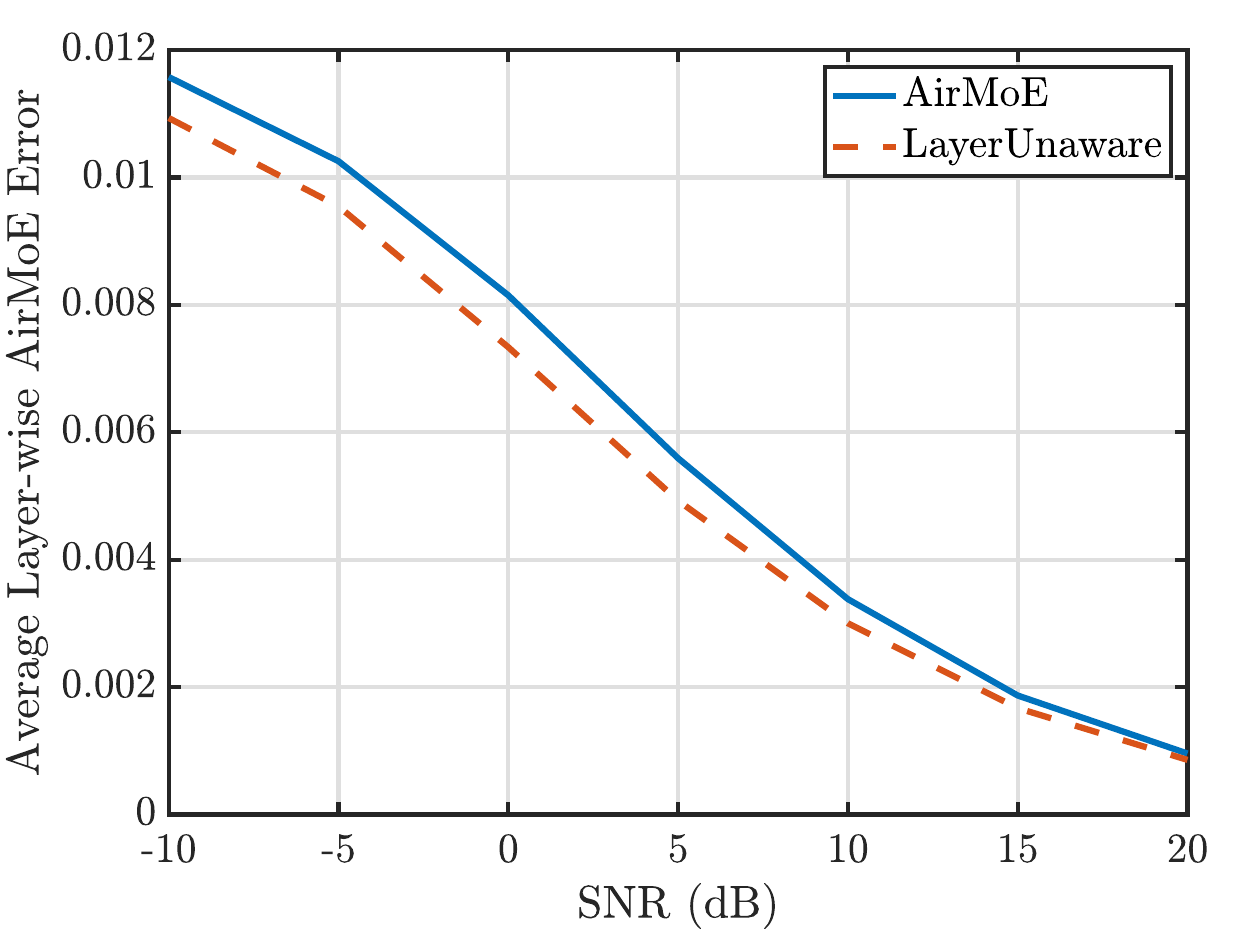}
    }
    \hfill
    \subfloat[AirMoE error\label{fig:layer_weight_weighted_mse}]{
        \includegraphics[width=0.30\textwidth]{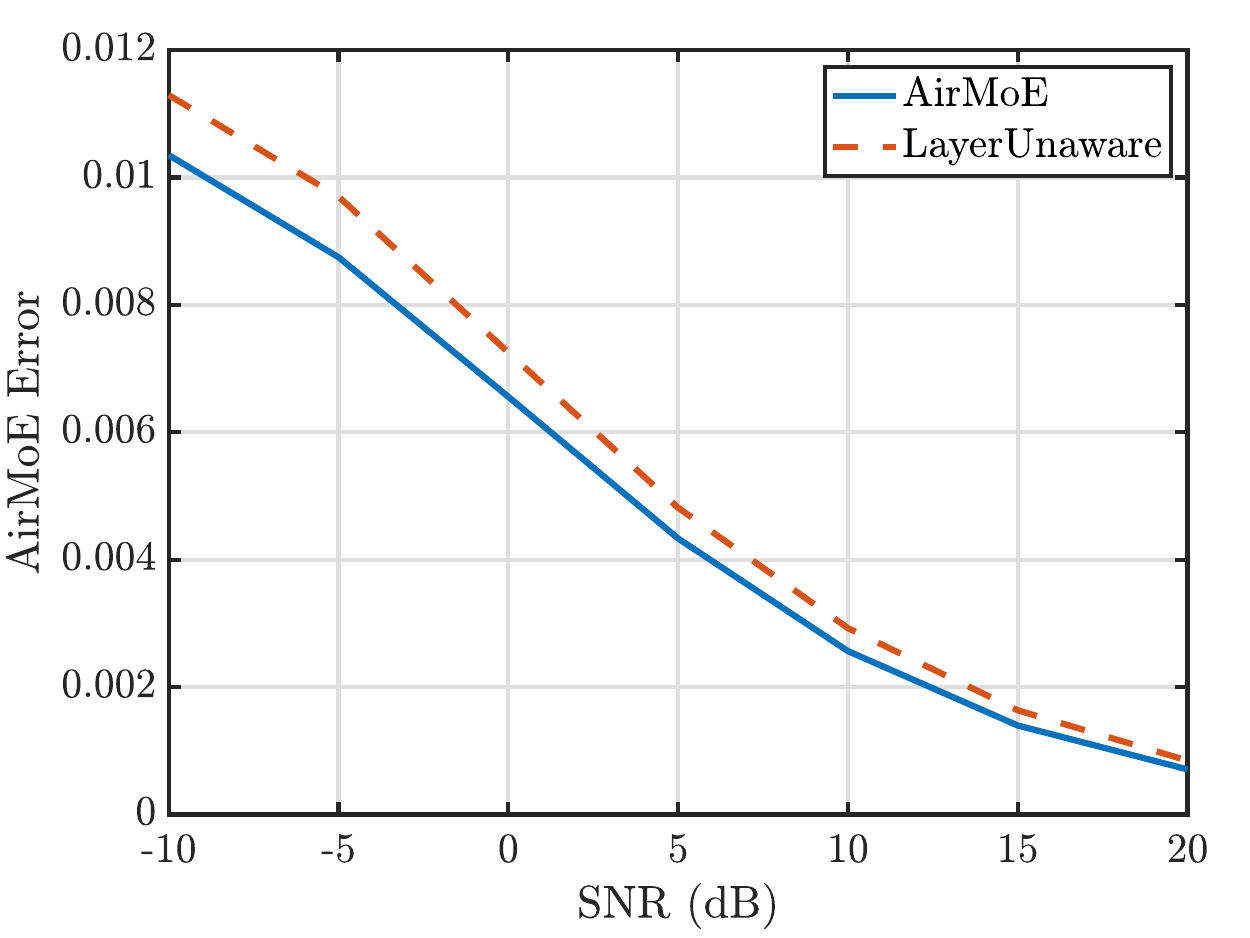}
    }
    \caption{Layer-sensitivity ablation under different SNRs.}
    \label{fig:ablation_studies}
    \vspace{-3mm}
\end{figure*}

\subsection{Gain from Power Control}
\label{subsec:gain_power_control}

Fig.~\ref{fig:arceasy_powercontrol_acc} shows the inference accuracy under different power-control policies with the proposed layer-aware placement. AirMoE achieves the highest accuracy over the considered SNR range, especially in the low- and medium-SNR regimes, demonstrating the benefit of jointly adapting the device-side precoders and the server-side denoising factor to instantaneous gating scores, channel realizations, and receiver noise in Theorem~\ref{thm:global_policy_fixedA}.

The trends of FullPower and ChannelInv are consistent with the asymptotic analysis in Sec.~\ref{subsec:asymptotic_analysis_extreme_snr}. At low SNRs, FullPower becomes relatively competitive because the receiver-noise term dominates the layer-wise AirMoE error. As the SNR increases, ChannelInv gradually approaches AirMoE, since coefficient mismatch becomes the dominant factor. In contrast, TruncInv performs poorly in the low-SNR regime because truncating weak-channel transmissions discards part of the router-selected expert outputs, resulting in an incomplete approximation of the ideal gating-weighted aggregation in \eqref{eq:moe_ideal_agg}.

\subsection{Gain from Expert Placement}
\label{subsec:gain_expert_placement}

Fig.~\ref{fig:arceasy_placement_acc} shows the inference accuracy under different expert-placement rules with the optimal power-control policy. AirMoE achieves the best overall performance, confirming the importance of the rank-preserving expert-device assignment characterized in Theorem~\ref{thm:placement_rank_preserving}.

The advantage of AirMoE comes from its more complete characterization of expert importance in \eqref{eq:placement_expert_weight}. ActFreqPlace only uses the activation frequency, while GatePlace only uses the activation-conditioned gating magnitude. Therefore, both baselines ignore part of the information captured by the proposed placement weight. RandPlace gives the lowest accuracy in most SNR regimes, further highlighting the need to match task-critical experts with devices of stronger long-term channel--power capability.

\begin{figure}[t]
    \centering
    \subfloat[Inference accuracy\label{fig:heterogeneity_acc}]{
        \includegraphics[width=0.45\linewidth]{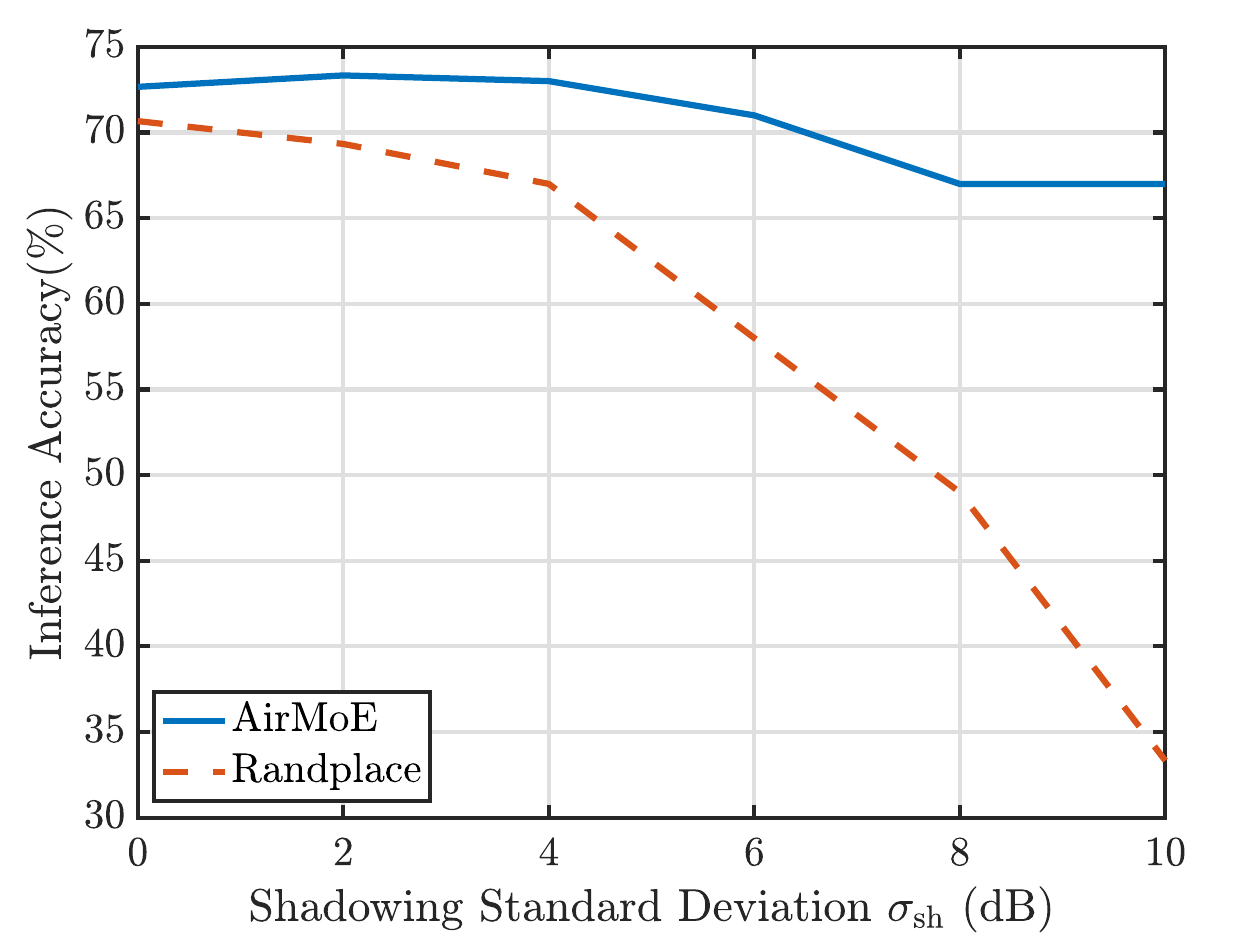}
    }
    \hfill
    \subfloat[AirMoE error\label{fig:heterogeneity_weighted_mse}]{
        \includegraphics[width=0.45\linewidth]{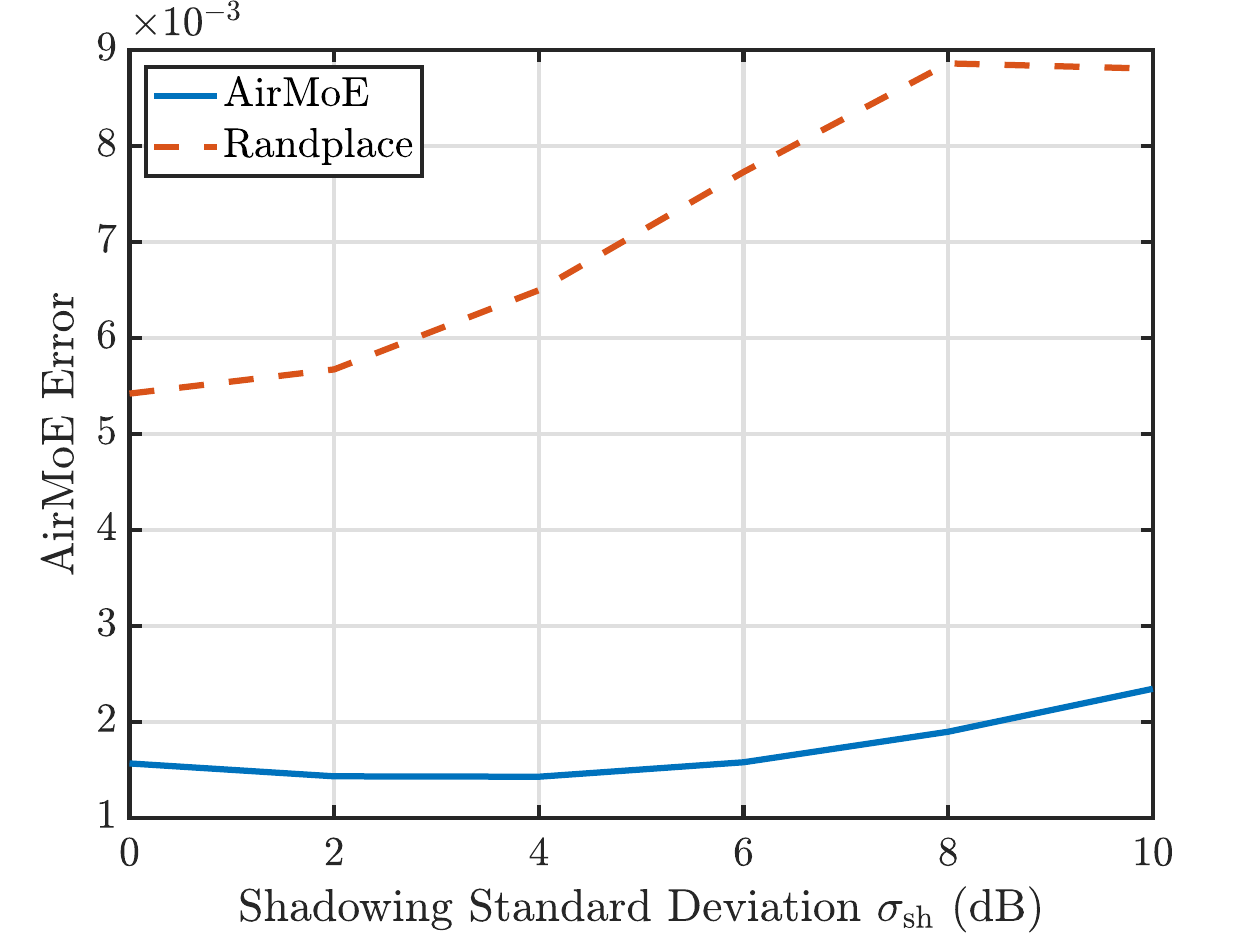}
    }
    \caption{Device-heterogeneity comparison under different shadowing standard deviations.}
    \label{fig:heterogeneity_ablation}
    \vspace{-3mm}
\end{figure}

\subsection{Effects of Layer Sensitivity}
\label{subsec:effect_layer_aware_weighting}

Fig.~\ref{fig:ablation_studies} evaluates the effect of the layer-sensitivity coefficients calibrated in Section~\ref{subsec:from_airmixer_error_to_inference_error} on expert placement and task-level distortion modeling. As shown in Fig.~\ref{fig:layer_weight_acc}, AirMoE achieves higher inference accuracy than LayerUnaware, especially in the low- and medium-SNR regimes. This is because aggregation distortion is more pronounced when the SNR is limited, making the layer where the distortion occurs more influential to the final prediction. Therefore, treating all MoE layers as equally sensitive to aggregation distortion can lead to suboptimal expert placement even under the same optimal power-control policy.

Figs.~\ref{fig:layer_weight_avg_mse} and~\ref{fig:layer_weight_weighted_mse} compare the unweighted average layer-wise AirMoE error with the proposed layer-sensitivity-weighted overall AirMoE error. The unweighted layer-wise average AirMoE error is not fully aligned with the accuracy trend: the layer-aware design achieves higher accuracy while having a slightly larger average aggregation error. This mismatch arises because the unweighted error averages distortions across layers equally and ignores their different inference-level impacts. By incorporating the calibrated layer-sensitivity coefficients, the proposed layer-weighted AirMoE error better matches the accuracy trend, confirming its role as a more task-relevant surrogate for E2E over-the-air MoE inference.

\subsection{Effects of Device Heterogeneity}
\label{subsec:effect_device_heterogeneity}

Fig.~\ref{fig:heterogeneity_ablation} examines how device heterogeneity affects expert placement.
We vary the shadowing standard deviation $\sigma_{\rm sh}$ to control the dispersion of long-term channel powers across devices. Fig.~\ref{fig:heterogeneity_acc} shows that the proposed placement and RandPlace achieve similar accuracy when $\sigma_{\rm sh}$ is small, since devices have comparable channel--power capabilities and the placement choice has limited impact. As $\sigma_{\rm sh}$ increases, the accuracy gap becomes much larger: RandPlace suffers a clear degradation, whereas the proposed placement remains robust. This is because stronger device heterogeneity amplifies the consequence of assigning important experts to weak devices, making rank-preserving expert-device matching more beneficial. Consistently, Fig.~\ref{fig:heterogeneity_weighted_mse} shows a widening AirMoE-error gap as $\sigma_{\rm sh}$ increases, explaining the stronger advantage of the proposed placement under higher device heterogeneity.

\section{Concluding Remarks}
\label{sec:conclusion}

This work has investigated over-the-air aggregation as a communication-efficient mechanism for wireless distributed MoE inference. The key observation is that the gating-weighted expert aggregation repeatedly performed in MoE layers is naturally aligned with wireless waveform superposition. Built on this alignment, AirMoE integrates simultaneous uplink expert-output aggregation with power control and expert placement to reduce inference-relevant aggregation distortion.
The resulting design highlights two principles for reliable over-the-air MoE inference: aggregation distortion should be evaluated according to its layer-dependent effect on inference performance, and expert placement should assign more inference-critical experts to devices with stronger long-term channel--power capability.
These principles suggest that wireless distributed inference should exploit the structure of neural model execution rather than treating communication as a separate transport layer.

From a broader perspective, AirMoE highlights several open problems in wireless MoE serving systems. Robust AirMoE aggregation can be revisited from an inference-aware perspective under imperfect CSI, synchronization errors, interference, and device mobility. Beyond empirical layer-sensitivity calibration, analytical models are needed to characterize how over-the-air aggregation distortion propagates across MoE layers and affects E2E inference performance. Another important direction is migration-aware adaptive expert placement, which balances the long-term inference benefit of updating expert-device associations against the communication, storage, and service-interruption costs incurred by expert migration.

\appendix

\subsection{Proof of Proposition~\ref{prop:problem_decomposition}}
\label{subsec:problem_decomposition}

Consider any placement $\mathbf X$ satisfying
\eqref{eq:placement_constraint_binary}--\eqref{eq:placement_constraint_device}.
For this fixed placement, the online power-control variables
$\{\mathbf p_{\ell}^{(t)},\eta_{\ell}^{(t)}\}$ are adapted to each instantaneous routing and channel realization.
Since the corresponding layer-wise AirMoE error and power-control constraints are decoupled across layer-token aggregations, the online variables can be optimized pointwise for each realization.
By the definition of $\mathrm{MSE}_{\ell}^{(t),\star}(\mathbf X)$ in \eqref{eq:min_mse}, any feasible online power-control choice satisfies
\begin{equation}
\label{eq:proof_pointwise_inequality}
\mathrm{MSE}_{\ell}^{(t)}
\left(
\mathbf X,\mathbf p_{\ell}^{(t)},\eta_{\ell}^{(t)}
\right)
\ge
\mathrm{MSE}_{\ell}^{(t),\star}(\mathbf X),
\end{equation}
for every layer-token aggregation and every realization.
Multiplying \eqref{eq:proof_pointwise_inequality} by $a_{\ell}\ge 0$, summing over $\ell$, and taking expectation yields
\begin{equation}
\label{eq:proof_lower_bound}
\sum_{\ell=1}^{L}
a_\ell
\mathbb E
\left[
\mathrm{MSE}_{\ell}^{(t)}
\left(
\mathbf X,\mathbf p_{\ell}^{(t)},\eta_{\ell}^{(t)}
\right)
\right]
\ge
\sum_{\ell=1}^{L}
a_\ell
\mathbb E
\left[
\mathrm{MSE}_{\ell}^{(t),\star}(\mathbf X)
\right].
\end{equation}
Thus, for the fixed placement $\mathbf X$, the right-hand side of \eqref{eq:proof_lower_bound} is a lower bound on the minimum achievable objective value over the online variables.

Conversely, because the online variables are allowed to adapt to each realization and are decoupled across instantaneous aggregations, selecting a pointwise optimal solution of \eqref{eq:min_mse} for each realization attains this lower bound.
Hence, for every feasible placement $\mathbf X$,
\begin{equation}
\begin{aligned}
\label{eq:proof_fixed_placement_value}
\min_{\{\mathbf p_{\ell}^{(t)},\eta_{\ell}^{(t)}\}}
\sum_{\ell=1}^{L}
a_{\ell}
\mathbb E
\left[
\mathrm{MSE}_{\ell}^{(t)}
\left(
\mathbf X,\mathbf p_{\ell}^{(t)},\eta_{\ell}^{(t)}
\right)
\right]=
\sum_{\ell=1}^{L}
a_{\ell}
\mathbb E
\left[
\mathrm{MSE}_{\ell}^{(t),\star}(\mathbf X)
\right],
\end{aligned}
\end{equation}
where the minimization on the left-hand side is subject to the instantaneous power-control constraints under the fixed placement $\mathbf X$.

Finally, minimizing \eqref{eq:proof_fixed_placement_value} over all placements satisfying
\eqref{eq:placement_constraint_binary}--\eqref{eq:placement_constraint_device}
yields the outer placement problem in \eqref{eq:min_placement}.
Combining an optimal placement of \eqref{eq:min_placement} with the corresponding pointwise optimal power-control variables from \eqref{eq:min_mse} therefore constructs an optimal solution to the original joint problem in \eqref{eq:joint_placement_power_problem}.
This proves the equivalence and completes the proof.

\subsection{Proof of Lemma~\ref{lem:eta_region_opt}}
\label{subsec:eta_region_opt}

For any given $n\in\{1,\dots,|\mathcal A|\}$, the first-order derivative of $\mathrm{MSE}_n(\eta)$ in \eqref{eq:mse_interval} with respect to $\eta$ is given by
\begin{equation}
\label{eq:mse_derivative}
\begin{aligned}
\frac{d}{d\eta}
\mathrm{MSE}_n(\eta)
=
\frac{2}{\eta^3}
\Bigg(
&\eta
\sum_{j=1}^{n}
|h_{(j)}|\sqrt{P_{(j)}}g_{(j)}
-
\sum_{j=1}^{n}
|h_{(j)}|^2P_{(j)}
-
\sigma^2
\Bigg).
\end{aligned}
\end{equation}
Since the term inside the parentheses in \eqref{eq:mse_derivative} is affine and strictly increasing in $\eta$, the stationary condition
$\frac{d}{d\eta}\mathrm{MSE}_n(\eta)=0$
admits a unique solution, given by $\hat{\eta}_n^\star$ in \eqref{eq:eta_n_stationary}.
Moreover, the derivative is nonpositive when $\eta\le \hat{\eta}_n^\star$ and nonnegative when $\eta\ge \hat{\eta}_n^\star$.
Therefore, $\mathrm{MSE}_n(\eta)$ is monotonically decreasing over $(0,\hat{\eta}_n^\star]$ and monotonically increasing over $[\hat{\eta}_n^\star,\infty)$.

It follows that the minimizer of $\mathrm{MSE}_n(\eta)$ over
$\mathcal R_n=[\tau_{(n)},\tau_{(n+1)}]$
is obtained by projecting $\hat{\eta}_n^\star$ onto $\mathcal R_n$, which gives \eqref{eq:eta_n_proj}.
This completes the proof.

\subsection{Proof of Lemma~\ref{lem:placement_layer_weighted_bound}}
\label{subsec:placement_layer_weighted_bound_proof}

Consider an arbitrary layer $\ell$, token $t$, and placement $\mathbf X$.
We construct the following feasible denoising factor:
\begin{equation}
\label{eq:eta_prime_placement_proof}
\eta_{\ell}^{\prime(t)}
=
\min_{i\in S_\ell^{(t)}}
\frac{
\sum_{m\in\mathcal M}
x_{\ell,i,m}
|h_m^{(t)}|\sqrt{P_m}
}
{
g_{\ell,i}^{(t)}
}.
\end{equation}
By construction, $\eta_{\ell}^{\prime(t)}$ is no larger than the saturation threshold of any activated expert in $S_\ell^{(t)}$, and hence all activated experts can achieve coefficient alignment.
Therefore,
\begin{equation}
\label{eq:placement_opt_upper_eta_prime}
\mathrm{MSE}_{\ell}^{(t),\star}(\mathbf X)
\le
\mathrm{MSE}_{\ell}^{(t)}(\eta_{\ell}^{\prime(t)};\mathbf X)
=
c_\ell^2
\frac{\sigma^2}{\left(\eta_{\ell}^{\prime(t)}\right)^2}.
\end{equation}

By the definition of $\eta_{\ell}^{\prime(t)}$, we have
\begin{equation}
\label{eq:eta_prime_inverse_bound}
\begin{aligned}
c_\ell^2
\frac{\sigma^2}{\left(\eta_{\ell}^{\prime(t)}\right)^2}
&=
c_\ell^2\sigma^2
\max_{i\in S_\ell^{(t)}}
\frac{
\left(g_{\ell,i}^{(t)}\right)^2
}
{
\left(
\sum_{m\in\mathcal M}
x_{\ell,i,m}
|h_m^{(t)}|\sqrt{P_m}
\right)^2
}
\\
&=
c_\ell^2\sigma^2
\max_{i\in S_\ell^{(t)}}
\sum_{m\in\mathcal M}
x_{\ell,i,m}
\frac{
\left(g_{\ell,i}^{(t)}\right)^2
}
{
|h_m^{(t)}|^2P_m
}
\\
&\le
c_\ell^2\sigma^2
\sum_{i\in S_\ell^{(t)}}
\sum_{m\in\mathcal M}
x_{\ell,i,m}
\frac{
\left(g_{\ell,i}^{(t)}\right)^2
}
{
|h_m^{(t)}|^2P_m
}.
\end{aligned}
\end{equation}
The second equality follows from the one-expert-per-device assignment constraint, under which only one $x_{\ell,i,m}$ equals one for each expert instance $(\ell,i)$. 
The last inequality follows from $\max_i a_i\le \sum_i a_i$ for non-negative $\{a_i\}$.

Combining \eqref{eq:placement_opt_upper_eta_prime} and \eqref{eq:eta_prime_inverse_bound}, we obtain
\begin{equation}
\label{eq:placement_layer_token_bound}
\mathrm{MSE}_{\ell}^{(t),\star}(\mathbf X)
\le
c_\ell^2\sigma^2
\sum_{i\in S_\ell^{(t)}}
\sum_{m\in\mathcal M}
x_{\ell,i,m}
\frac{
\left(g_{\ell,i}^{(t)}\right)^2
}
{
|h_m^{(t)}|^2P_m
}.
\end{equation}
Taking expectation over tokens and channel realizations, multiplying by $a_\ell$, and summing over all layers yield \eqref{eq:placement_factorized_bound_raw}.
This completes the proof.

% \subsection{Proof of Theorem~\ref{thm:placement_rank_preserving}}
% \label{subsec:rank_preserving_placement_proof}

% We prove the result by an exchange argument.
% Consider any feasible placement that contains an inversion, i.e., two expert instances with weights
% $W_a\ge W_b$ are assigned to two devices with costs $R_p\ge R_q$, respectively.
% Swapping the two assigned devices preserves feasibility, since it only exchanges the devices assigned to two expert instances.

% The change in the objective value after this swap is
% \begin{equation}
% \begin{aligned}
% \Delta
% &=
% \left(W_aR_q+W_bR_p\right)
% -
% \left(W_aR_p+W_bR_q\right) \\
% &=
% -\left(W_a-W_b\right)\left(R_p-R_q\right)
% \le 0 .
% \end{aligned}
% \end{equation}
% Therefore, removing an inversion cannot increase the objective value.
% Starting from any feasible placement, repeatedly applying this exchange step eliminates all inversions between the ordering of expert weights and the ordering of device costs.
% The resulting placement is exactly the rank-preserving assignment in \eqref{eq:rank_preserving_assignment}, and is therefore globally optimal for \eqref{eq:placement_separable_problem}.

\subsection{Proof of Theorem~\ref{thm:placement_rank_preserving}}
\label{subsec:rank_preserving_placement_proof}

We prove the result by an exchange argument.
First, if a used device $p$ has a larger cost than an unused device $q$, i.e., $R_p>R_q$, reassigning the expert on device $p$ to device $q$ preserves feasibility and does not increase the objective since $W_j\ge0$.
Hence, an optimal solution uses the $|\mathcal J|$ devices with the smallest costs.

It remains to determine the matching among these devices.
Consider two experts with $W_a\ge W_b$ assigned to devices with $R_p\ge R_q$, respectively.
Swapping their assignments changes the objective by
\begin{equation}
\begin{aligned}
\Delta
&=
\left(W_aR_q+W_bR_p\right)
-
\left(W_aR_p+W_bR_q\right)\\
&=
-\left(W_a-W_b\right)\left(R_p-R_q\right)
\le0.
\end{aligned}
\end{equation}
Thus, removing any inversion cannot increase the objective.
Repeatedly applying this exchange yields an optimal rank-preserving assignment, where the $j$-th largest-weight expert is assigned to the $j$-th lowest-cost device, as in \eqref{eq:rank_preserving_assignment}.

\IfFileExists{main.bib}
  {\bibliography{main}}
  {\bibliography{../../main}}
\bibliographystyle{IEEEtran}

\end{document}